\newtheorem{theorem}{Theorem}
\newtheorem{definition}{Definition}
\newtheorem{lemma}{Lemma}
\begin{document}

\title{Quantum Indistinguishable Obfuscation via Quantum Circuit Equivalence}

\author{Yuanjing Zhang}
\affiliation{School of Cyber Science and Technology, Beihang University, Beijing 100191, China}
\orcid{0009-0007-2739-6075}
\author{Tao Shang\textsuperscript{\Letter}}
\email{shangtao@buaa.edu.cn}
\orcid{0000-0003-2369-1521}
\thanks{This project was supported by the National Natural Science Foundation of China (No. 62471020, 61971021), the Key Research and Development Program of Hebei Province (No. 22340701D), and the Chinese Universities Industry-Education-Research Innovation Foundation of BII Education Grant Program (No. 2021BCA0200) for valuable helps. }
\affiliation{School of Cyber Science and Technology, Beihang University, Beijing 100191, China}
\author{Kun Zhang}
\affiliation{School of Cyber Science and Technology, Beihang University, Beijing 100191, China}
\author{Chenyi Zhang}
\affiliation{School of Cyber Science and Technology, Beihang University, Beijing 100191, China}
\author{Haohua Du}
\affiliation{School of Cyber Science and Technology, Beihang University, Beijing 100191, China}
\author{Xueyi Guo}
\affiliation{Beijing Academy of Quantum Information Sciences Beijing, China}
\maketitle

\begin{abstract}
Quantum computing solutions are increasingly deployed in commercial environments through delegated computing, especially one of the most critical issues is to guarantee the confidentiality and proprietary  of quantum implementations. Since the proposal of general-purpose indistinguishability obfuscation (iO) and functional encryption schemes, iO has emerged as a seemingly versatile cryptography primitive. Existing research on quantum indistinguishable obfuscation (QiO) primarily focuses on task-oriented, lacking solutions to general quantum computing. In this paper, we propose a scheme for constructing QiO via the equivalence of quantum circuits. It introduces the concept of quantum subpath sum equivalence, demonstrating that indistinguishability between two quantum circuits can be achieved by incremental changes in quantum subpaths. The restriction of security loss is solved by reducing the distinguisher to polynomial probability test. The scheme obfuscates the quantum implementation of classical functions in a path-sum specification, ensuring the indistinguishability between different quantum implementations. 
The results demonstrate the feasibility of indistinguishability obfuscation for general circuits and provide novel insights on intellectual property protection and secure delegated quantum computing. 
\end{abstract}

\section{Introduction}
In recent years, quantum computing has emerged as a transformative technology with the potential to revolutionize various fields, including scientific research, optimization, and machine learning. As quantum computing solutions are increasingly deployed in commercial settings, one of the foremost concerns is to guarantee the confidentiality and proprietary of quantum circuits, particularly in the context of delegated quantum computing. The circuits encapsulate complex algorithms, sensitive data, and proprietary functionalities essential for executing various computational tasks. It is crucial to ensure the integrity and confidentiality of quantum circuits for safeguarding intellectual property, trade secrets, and sensitive information. 

The concept of obfuscation originated from program obfuscation: software companies aim to prevent users from obtaining the implementation logic of the program through decompilation and other methods after releasing their software, thereby safeguarding their trade secrets. The theoretical foundation of program obfuscation was proposed in 2001 \cite{2001Barak}, and it was not until 2013 that researchers put forward schemes for obfuscating general computing \cite{2013Garg}. A parallel line of research over the last decade has demonstrated that most cryptographic primitives, including several powerful ones such as multiparty non-interactive key exchange \cite{BZ14}, succinct non-interactive arguments \cite{SW21}, software watermarking \cite{2016stoc},  and deniable encryption \cite{SW21} can be built from indistinguishability obfuscation (iO). Moreover, iO has also found appeal outside cryptography, e.g. for establishing the hardness of certain tasks in differential privacy \cite{BZ14}. These results have established iO as a `central hub' of theoretical cryptography. 
\begin{table*}[ht]
\caption{Related Researchs on Quantum Indistinguishability Obfuscation}
\resizebox{\textwidth}{!}{
    \centering
    \begin{threeparttable}
    \renewcommand{\arraystretch}{1.5}
    \setlength{\tabcolsep}{10pt}
    \begin{tabular}{||c|c|c|c|c|c||}
    \hline
    \textbf{Reference} & \makecell[c]{\textbf{General} \\\textbf{ purpose} } & \textbf{Function} & \textbf{Ability of challenger or verifier} & \textbf{Security loss} & \textbf{Year} \\ \hline
    \hline
    [10] & \ding{56} & Quantum circuits obfuscation with logarithmic non-Clifford gates & Classical computation under LWE assumption & Exponential & 2021 \\ \hline
    [11] & \ding{56} & Trivial quantum circuits obfuscation with fixed zero outputs & Classical computation under LWE assumption & Exponential & 2022 \\ \hline
    [18] & \ding{52} & \makecell[c]{Fixed universal quantum circuits obfuscation \\ and assume that the description of $C$ is included in $\rho$ }& Polynomial-size quantum computation & Exponential & 2024 \\ \hline
    \textbf{Our Work} & \ding{52} & Universal quantum circuits obfuscation & Polynomial-size quantum computation & Polynomial & - \\ \hline
    \end{tabular}
    
    \begin{tablenotes}
    \item[*] $C$ and $\rho$ are circuit-state pairs of quantum implementation of classical function
    \end{tablenotes}
    \end{threeparttable}
    }
    
    \label{tab:qio_comparison}
\end{table*}
Similar to classical obfuscation, quantum obfuscation provides a viable technological solution to guaranteeing secure delegated quantum computing while protecting the execution logic of commercially sensitive quantum programs from potential adversary. The characteristics of quantum entanglement and superposition hinder direct application of classical obfuscation in quantum obfuscation. Research on quantum obfuscation has predominantly focused on definitional work \cite{AF16}, impossibility results \cite{AF16,ABDS21}, and limited classes of quantum computing \cite{Shang19,Zhang22,BK21,BM22}. 

One prominent framework for quantum obfuscation is Virtual Black Box (VBB) obfuscation. The idea behind VBB is that an obfuscated program should behave like a black box when observed by any adversarial party, ensuring that regardless of the adversary’s observations, analyses, or queries, they cannot extract more information than what a direct query to the original program’s oracle would reveal. However, achieving VBB obfuscation for general quantum circuits has proven challenging. Alagic et al. \cite{ABDS21} demonstrated the provable existence of circuits that cannot be VBB obfuscation, even when utilizing quantum information. 

Therefore, subsequent research on quantum obfuscation has primarily adopted two approaches, one involves constructing VBB obfuscation for limited classes of quantum computing, and the other focuses on relaxation of VBB obfuscation towards general-purpose. Recent research has explored VBB obfuscation under certain restricted scenarios, such as quantum point function obfuscation \cite{Shang19,Zhang21} and its instantiation schemes \cite{Zhang22}, black-box constructions for unlearnable programs \cite{ALL+21}, polynomial-size pseudo-deterministic quantum circuit \cite{2023Pseudo}, quantum non-linear functions \cite{Pan2023}, have demonstrated its viability under specific conditions.

Unlike the limited classes of quantum computing for VBB obfuscation research above, Barak et al. \cite{BGI+12} introduced a relaxation concept of obfuscation known as indistinguishability obfuscation, which only requires that the obfuscation of two functionally equivalent programs is indistinguishable in computing. Despite this relaxation, subsequent works such as \cite{GGH+16,SW21} demonstrated its considerable strength. Quantum indistinguishability obfuscation (QiO) possesses the potential to construct various other quantum cryptographic primitives and can be applied to a wider range of quantum computing tasks and application scenarios.

However, as shown in Table ~\ref{tab:qio_comparison}, designing QiO schemes for general-purpose quantum computing remains challenges. As far as we know about the existing constructions from the literature, such as those by Broadbent et al. \cite{BK21}, Bartusek et al. \cite{BM22}, and Coladangelo et al. \cite{2023How}, constraint QiO to restricted classes of quantum circuits, such as those using limited non-Clifford gates, trivial output functions, or a fixed universal quantum circuit for best-possible copy protection. Instead of considering implementations as circuit-state pairs, the description of circuit is assumed to be included in state. Despite their theoretical importance, these constructions fall short of addressing the complexities inherent in general quantum circuits. Additionally, proving functional equivalence for general quantum circuits often requires exhaustive enumeration of all possible inputs. This leads to exponential security loss, making such schemes impractical for large-scale or complex circuits. Even the best current constructions rely on inefficient exhaustive verification, which scales poorly and restricts their applicability to limited scenarios.

In this paper, we propose a scheme aimed at constructing QiO by leveraging the equivalence proof of quantum circuits. The starting point is a simple observation: assuming the existence of a secure quantum indistinguishability obfuscator, a mathematical proof is developed to convince oneself and others that two programs are functionally equivalent. We introduce a polynomial-time probabilistic test for functional equivalence, significantly mitigating the exponential security loss inherent in previous approaches. By hybrid argument, we demonstrate that two quantum implementations of classical functions are quantum indistinguishable.
The main contributions of this paper is described as follows:
\begin{enumerate}
\item We overcome the constraints of existing QiO schemes that apply only to certain classes of quantum circuits. Through semantic transformations of quantum circuits, we construct a general-purpose QiO scheme that works for arbitrary quantum circuits, bypassing the need for restrictive gate classes and trivial output functions. It also eliminates the restrictive assumptions of previous QiO constructions and solves the problem of exponential security loss.

\item We propose QiO scheme via quantum circuit equivalence (QceQiO). It is the candidate construction of indistinguishable obfuscation for general quantum circuits. It relies on the reverse engineering of circuit equivalence through succinct mathematical proofs, enabling efficient and scalable obfuscation for arbitrary quantum circuits.

\item 
We introduce the concept of incremental subpath equivalence, which involves transforming quantum circuits through incremental changes in subpaths. Two circuits only differ in two polynomial-sized functionally equivalent subpaths at the same position, and each incremental change is of small scale.
In addition, the problem of distinguishing quantum circuits is translated into a polynomial identity testing problem.


\end{enumerate}


\section{Preliminaries}
\label{section2}
\subsection{Quantum Circuits}
In the classical circuit model, the state of an $n$-bit system is represented as a binary string of length $n$. Classical gates correspond to operators that map binary strings of length $n$ to binary strings of length $m$. More precisely, a binary string of length $n$ can be viewed as a vector in $\mathbb{F}_{2}^{n}$, where $\mathbb{F}_{2}$ is the finite field with addition corresponding to Boolean XOR $(\oplus)$ and multiplication corresponding to Boolean AND $(\land)$. Classical gates are then represented as operators $\left. f:\mathbb{F}_{2}^{n}\rightarrow\mathbb{F}_{2}^{m} \right.$, and we often refer to $f$ as a classical function. 

The quantum circuit model is an extension of the classical circuit model to handle quantum effects in quantum computation. Specifically, it describes the state space of an $n$-qubit system as a vector in a $2^n$-dimensional complex vector space $\mathcal{H}$, spanned by the $n$-qubit states. Conventionally, classical states are referred to as the standard or computational basis of $\mathcal{H}$, and they are represented by Dirac notation: $\left. \left| x \right. \right\rangle$,where $x \in \mathbb{F}_{2}^{n}$. 
More generally, a quantum system can be described as a probability distribution over a set of computational basis states denoted as $\left| x_j \right\rangle$, where each $\left| x_j \right\rangle$ represents a particular state in the state space. Note that these states are not the standard basis vectors but instead correspond to a specific set of quantum states that are meaningful in the context of the given system. The probability of the system being in the state $\left| x_j \right\rangle$ is denoted by $p_j$. The density operator $\rho$, which represents the overall quantum state of the system, is defined as:$\rho: = {\sum\limits_{j}{p_{j}\left. \left| x_{j} \right. \right\rangle}}\left\langle \left. x_{j} \right| \right.$, where $p_j$ is the probability associated with each computational basis state $\left| x_{j} \right\rangle$.
For a quantum system with $n$ qubits, its density operator is a Hermitian matrix, which is a complex matrix. 

For a closed physical system, its quantum state evolves through a unitary transformation. Specifically, assuming the quantum state of the closed system at any two time instances $t_1$ and $t_2$ is represented by $\rho$ and $\rho^{\prime}$ respectively, the evolution between $\rho$ and $\rho^{\prime}$ is governed by a unitary transformation $U$, determined by the time interval $[t_1,t_2]$. Mathematically, this transformation is expressed as 
$\rho^{\prime} = \left. \left| Ux \right. \right\rangle\left\langle \left. Ux \right| \right. = U\left. \left| x \right. \right\rangle\left\langle \left. x \right| \right.U^{\dagger} = U\rho U^{\dagger}.$
\vspace{-10pt}
\begin{figure}[h]
  \centering
  \subfloat[Identity gate]{\includegraphics[width=0.25\linewidth]{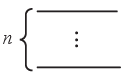}\label{fig:a}}\hspace{0.05\linewidth}
  \subfloat[$n$-qubit gate]{\includegraphics[width=0.3\linewidth]{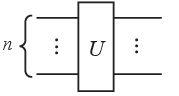}\label{fig:b}}
  \caption{Identity gate $I_n$ and $n$-qubits gate $U$}
\end{figure}

In graphical representation, the $n$-qubits gate that does not apply any operation to qubits is known as an identity gate, typically denoted by $I_n$, as shown in Fig. 1.(a). The $n$-qubits gate is depicted as a box $U$ with $n$ input and output lines, as shown in Fig. 1.(b).
In a quantum circuit, complex quantum circuits can be constructed by combining $I$ with different $U$ gates both horizontally and vertically. 
Through these combinations, any quantum computing process can be described.
\subsection{Quantum Indistinguishable Obfuscation}
Quantum obfuscation is a quantum derivative of classical obfuscation theory, developed at the intersection of quantum computing and quantum cryptography. It explores efficient ways to achieve functional encryption and copyright protection within quantum circuits.  Obfuscation was defined by three crucial properties, namely functional equivalence, polynomial expansion, and virtual black box security, was first formulated by Barak et al. \cite{Barak2012}. Alagic et al. \cite{AF16} introduced the concept of classical obfuscation to the quantum domain, defined three corresponding properties for quantum obfuscation. It takes a quantum circuit $C$ as input and produces a `functionally equivalent' quantum state $O(C)$ as output.

\begin{definition}
\label{def_qvbbo}
(Quantum Virtual Black Box Obfuscation): Quantum VBB Obfuscation consists of a quantum algorithm $O$ and a quantum polynomial-time algorithm $\delta$. For any quantum circuit $C$ with $n$ qubits, the quantum algorithm $O$ outputs a quantum state $O(C)$ with $m$ qubits, satisfying the following three conditions:
\begin{itemize}
  \item Polynomial expansion: $m = poly(n)$
  \item Functional equivalence: For any quantum state $\rho$ with $n$ qubits, $
\left\| {\delta\left( {O(C)\bigotimes\rho} \right) - U_{C}\rho U_{C}^{\dagger}} \right\|_{tr} \leq negl(n).$
where $U_C$ represents the unitary matrix corresponding to $C$.
  \item Virtual black box: For any quantum polynomial-time algorithm A, there exists a quantum polynomial-time algorithm $S^{U_{C}}$ such that: $
\left| {{\Pr\left\lbrack {A\left( {O(C)} \right) = 1} \right\rbrack} - \Pr\left\lbrack S^{U_{C}}\left( \left. \left| 0^{n} \right. \right\rangle \right) = 1 \right\rbrack} \right| \leq negl(n).$
\end{itemize}

\end{definition}

Alagic et al. \cite{ABDS21} proved the nonexistence of general Quantum VBB Obfuscation. Consequently, similar to the relaxation in classical obfuscation, research in quantum obfuscation has been divided into two approaches, namely VBB obfuscation for specific functions and relaxation of VBB obfuscation. To weaken the virtual black box property, Broadbent et al. \cite{BK21} introduced the definition of indistinguishability of quantum states.

\begin{definition}
\label{def_indistingqs}
(Indistinguishability of Quantum States): For two sets of quantum states, $\mathcal{R} = \left\{ \rho_{n} \right\}_{n \in \mathbb{N}}$ and $\mathcal{S} = \left\{ \sigma_{n} \right\}_{n \in \mathbb{N}}$, where $\rho_{n},\sigma_{n}$ are quantum states,
\begin{itemize}
    \item If, for any $n$, $\rho_{n} = \sigma_{n}$, then $R$ and $S$ are perfectly indistinguishable.
    \item If there exists a negligible function negl, such that for all sufficiently large $n$, $\left\| {\rho_{n} - \sigma_{n}} \right\|_{F}   \leq negl(n)$, then $R$ and $S$ are statistically indistinguishable.
    \item If there exists a negligible function negl, such that for every quantum state $\rho_{n} \in \mathcal{R},\sigma_{n} \in \mathcal{S}$, and all polynomial-time quantum distinguishers $D$, $\left| {{\Pr\left\lbrack {D\left( \rho_{n} \right) = 1} \right\rbrack} - {\Pr\left\lbrack {D\left( \sigma_{n} \right) = 1} \right\rbrack}} \right| \leq negl(n),$ then $R$ and $S$ are computationally indistinguishable.
\end{itemize}
\end{definition}


In practical equivalence testing, a major and often overlooked concern involves numerical inaccuracies. Matrices are typically stored by floating-point numbers, introducing imprecision and rounding errors. 
Generally, statistical indistinguishability is considered a stronger security concept than computational indistinguishability, as it provides an information-theoretic guarantee that no algorithm can distinguish between two distributions, regardless of computational resources. In contrast, computational indistinguishability implies that no algorithm can efficiently distinguish in polynomial time, which is less robust when compared to statistical indistinguishability. In some contexts, such as discussions of cryptography and security, greater emphasis is placed on computational indistinguishability as it corresponds more closely with the capabilities of real attackers.


\subsection{Quantum Path Integral}
\label{qpathinte}
\begin{figure}[h]
  \centering
  \subfloat{\includegraphics[width=0.43\linewidth]{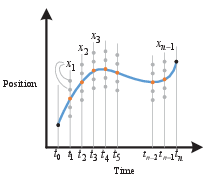}\label{fig:a}}\hspace{0.05\linewidth}
  \subfloat{\includegraphics[width=0.43\linewidth]{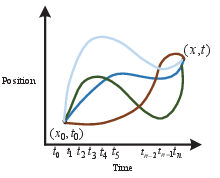}\label{fig:b}}
\label{fig3}
  \caption{Quantum Paths and Fluctuations in the Feynman Path Integral. In Feynman path integral method, the state of a physical system can be described by considering all possible paths. 
These paths include classical trajectories as well as various possibilities arising from quantum fluctuations.}
\end{figure}
The path integral formulation of quantum mechanics was formalized by Feynman. It provides a generalization and formalization of quantum physics extending from the principle of classical mechanics. The path integral method expresses the evolution of the wave function after a finite time through the integral transformation of the propagator. The connection between the quantum states $\psi(t)$ and $\psi\left( t_{0} \right)$ is given by the evolution operator $U(t, t_0)$,
$\left. \left| \psi(t) \right. \right\rangle = U\left( t,t_{0} \right)\left. \left| \psi\left( t_{0} \right) \right. \right\rangle$.


According to the Schr$\ddot{o}$dinger equation $i\hbar \frac{d}{dt}\left. \left| \psi(t) \right. \right\rangle = H\left. \left| \psi(t) \right. \right\rangle$, the evolution operator $U$ is defined as,
$U\left( {t,t_{0}} \right) = e^{- i\frac{H}{\hbar}(t - t_{0})}$
, where $\hbar$ is the reduced Planck constant, $H$ is the Hamiltonian operator, and the coordinate representation of the evolution operator is the propagator,
$K\left( {x,t;x_{0},t_{0}} \right) = \left\langle x \middle| {U\left( t,t_{0} \right)} \middle| x_{0} \right\rangle.$

The meaning of the propagator is that it represents the probability amplitude for a particle to be at position $x$ at time $t$ when it was at position $x_0$ at the time $t_0$. The evolution operator satisfies the composition rule $U\left( {t_{1},t_{0}} \right) = U\left( {t_{1},t} \right)U\left( {t,t_{0}} \right)$, and correspondingly, the propagator has the expression,
$K\left( {x_{1},t_{1};x_{0},t_{0}} \right) = {\int_{\Lambda}^{}{\mathcal{D}\lbrack P\rbrack e^{iS(P)/h}}}.$


Here $\Lambda$ represents all paths from $(x_0, t_0)$ to $(x, t)$, and $S(P)$ is the action of the particle along the path $P$ from $x_0$ to $x$. $S(P)$ is the Hamiltonian action function for the path $P$. The contribution of the path $P$ depends on $e^{iS(P)/\hbar}$, where $\hbar = 1$. $\mathcal{D}$ represents functional differentiation, indicating integral over all possible path $P$.

\begin{figure*}[htb]
\centering
\includegraphics[width=0.95\linewidth]{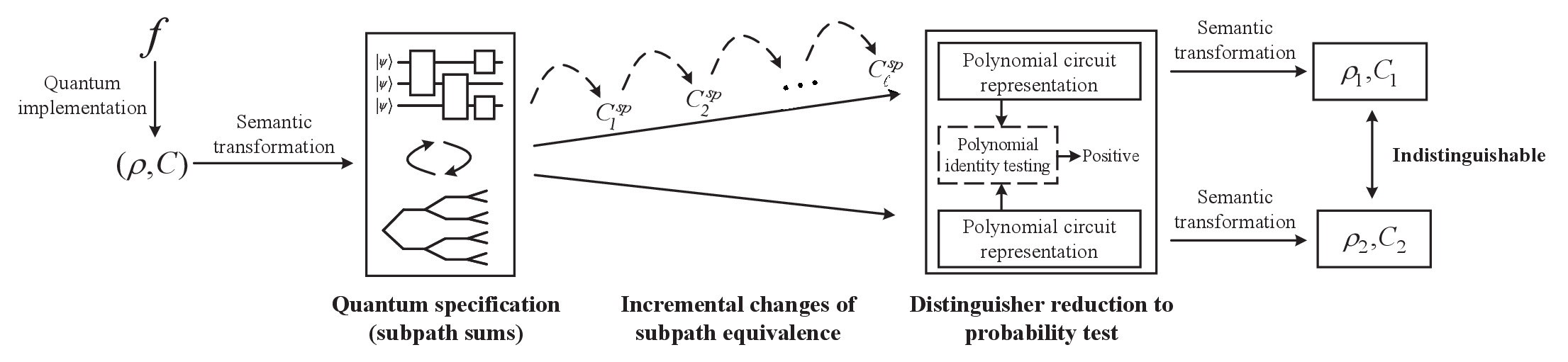}
\caption{Framework of quantum indistinguishable obfuscation}
\label{fig13}
\end{figure*}

\section{Quantum Indistinguishable Obfuscation Scheme via Quantum Circuit Equivalence}
\label{section3}
In this section, we provide the definition and construction of quantum indistinguishability obfuscation scheme via quantum circuit equivalence (QceQiO).

\subsection{Quantum Indistinguishability Obfuscation Scheme}
A framework of quantum indistinguishable obfuscation scheme is shown in Fig. \ref{fig13}. A function $f$ take semantic transformation to generate a quantum specification comprising the quantum implementation. This specification is then represented as subpath sums, forming the basis for subsequent transformations. By introducing incremental changes in subpath equivalence, QceQiO progressively constructs obfuscated circuits, preserving functional equivalence while ensuring indistinguishability. Polynomial identity testing is applied to the phase polynomials of quantum subpaths, reducing the distinguisher to a polynomial probability test.

Before presenting quantum indistinguishability obfuscation for equivalent quantum circuits, we provide the definition of an equivalent quantum implementation of a function $f$.
The quantum implementation of a classical function is defined as follows \cite{2023How}:
\begin{definition}{(Quantum Implementation of Classical Function):}
Let $n,m \in \mathbb{N}$, $\left. f:\left\{ 0,~1 \right\}^{n}\rightarrow\left\{ 0,~1 \right\}^{m} \right.$, and $\epsilon \in \lbrack 0,~1\rbrack$. A $(1-\epsilon)$-quantum implementation of $f$ is a pair $(\rho,C)$, where $\rho$ is an initial quantum state,  $C$ is a quantum circuit acting on $\rho$ and an additional $n$-qubit register initialized with the classical input $x$. The implementation satisfies:
\begin{equation}
\forall x \in \left\{ {0,~1} \right\}^{n},\Pr \left\lbrack C(\rho,~x) = f(x) \right\rbrack \geq 1 - \epsilon.
\end{equation}
where for the quantum implementation $(\rho,C)$, its size is defined as the maximum of the number of qubits in $\rho$ and the number of gates in the circuit $C$.
where  $C(\rho,x)$ denotes the output state after applying the quantum circuit $C$ on the input state  $(\rho,x)$, and $\Pr$ refers to the probability over the randomness in the measurement outcomes. The size of the quantum implementation is defined as the maximum of the number of qubits in $\rho$ and the number of gates in $C$.
\label{def_qclassf}
\end{definition}

In contrast to the classical circuit model, quantum gates in a quantum circuit are restricted to a subset of all operators on $\mathcal{H}$. Specifically, quantum gates are linear operators $\left. U:\mathcal{H}\rightarrow\mathcal{H} \right.$ that preserve the $L_2$ norm. Such operators $U$ satisfy $U^{\dagger}U = UU^{\dagger} = ~I$, where $U^{\dagger}$ denotes the adjoint of $U$, and they are referred to as unitary operators. Quantum circuits essentially describe the action of unitary operators on a Hilbert space.

\begin{definition}
(Equivalent Quantum Implementation of $f$): Let $n,m \in \mathbb{N}$, $\left. f:\left\{ 0,~1 \right\}^{n}\rightarrow\left\{ 0,~1 \right\}^{m} \right.$, $\left( \rho_{0},C_{0} \right)$ and $\left( \rho_{1},C_{1} \right)$ be two equivalent quantum implementations of $f$,
\begin{equation}
\left| {{\Pr\left\lbrack {D\left( {\rho_{0},C_{0}} \right) = 1} \right\rbrack} - {\Pr\left\lbrack {D\left( {\rho_{1},C_{1}} \right) = 1} \right\rbrack}} \right| \leq negl(\lambda)
\end{equation}
where $C$ is a quantum circuit satisfying $\forall x \in \left\{ {0,~1} \right\}^{n},\Pr\left\lbrack C(\rho,~x) = f(x) \right\rbrack \geq 1 - negl(\lambda)$.
\end{definition}
In quantum computing, equivalent quantum implementations can perform the same quantum computing functionality, which means that copy protection generates outputs will be indistinguishable when equivalent quantum implementations are input. This is a fundamental approach for copy protection of quantum implementations. In recent work of Andrea et al. \cite{2023How}, the approach is also referred to as the best possible copy protection. We present the definition of QceQiO.

\begin{definition}
\label{qceQiO}
(Quantum Indistinguishability Obfuscation via Quantum Circuits  Equivalence-QceQiO): Let $\left\{ Q_{\lambda} \right\}_{\lambda \in \mathbb{N}}$ be a family of quantum implementations for the classical function $f$, and $C$ be a family of quantum circuits. A quantum indistinguishability obfuscator for equivalent quantum circuits is a quantum polynomial-time (QPT) algorithm QceQiO that takes as input a security parameter $1^{\lambda}$ and a pair of quantum implementations $(\rho,C) \in Q_{\lambda}$, and outputs a pair of $\left( \rho^{\prime},C^{\prime} \right)$. Additionally, QceQiO should satisfy the following conditions:
\begin{itemize}
\item Polynomial Expansion: There exists a polynomial function $\text{poly}(n)$ such that for all $C \in \mathcal{C}$, $\mathcal{C}$ is quantum circuit family ,the size of the obfuscated circuit $C^{\prime}$ satisfies:
\begin{equation}
|C'| = \text{poly}(|C|).
\end{equation}
This means that the size of the obfuscated circuit $C^{\prime}$ is polynomially bounded in terms of the size of $C$.
\item Functional equivalence: For any $C \in \mathcal{C}$, 
\begin{gather}
(\rho', C') \leftarrow QceQiO(\rho, C) \text{,} \nonumber\\
 C \text{ and } C' \text{ are under } \Delta \text{ subpath equivalence.}
\end{gather}
\item Computational indistinguishability: For any QPT distinguisher $D$, there exists a negligible function $negl$ such that for all $\lambda$ and two pairs of quantum implementations $(\rho_{1}, C_{1}), (\rho_{2}, C_{2})$ of the same function $f$, the distributions of the obfuscated outputs are computationally indistinguishable:
\begin{equation}
\resizebox{0.47\textwidth}{!}{
$\left| \begin{matrix}
{{\Pr\left\lbrack {D\left( ~QceQiO\left( {1^{\lambda},~\left( {\rho_1,C_1} \right)} \right)\rightarrow\left( {\rho_{1}^{\prime},C_{1}^{\prime}} \right) \right) = 1} \right\rbrack} -} \\
{\Pr\left\lbrack {D\left( QceQiO\left( {1^{\lambda},~\left( {\rho_2,C_2} \right)} \right)\rightarrow\left( {\rho_{2}^{\prime},C_{2}^{\prime}} \right) \right) = 1} \right\rbrack}
\end{matrix} \right| \leq ~negl(\lambda).$}
\end{equation}
\end{itemize}
\end{definition}
Similar to the indistinguishability of quantum states represented by density operators, distinguishing the equivalence of two quantum implementations of classical function $f$ can be simplified. Simplification involves sequentially evolving all inputs using the constructed unitary matrices. This is also an implicit strategy in the security proofs of all universal iO structures. Although conceptually straightforward, it quickly becomes a challenging task due to the exponential scaling of matrices with the number of qubits \cite{1993bern}. It is also an implicit security loss in all indistinguishability obfuscation.

As seen from the definition of QiO, if two quantum circuits $C_1$ and $C_2$ exhibit functional equivalence, i.e., for all inputs $x$, $C_{1}(x) = C_{2}(x)$, then their obfuscations must be indistinguishable for any polynomial-time algorithm. Assuming there exists a polynomial-time reduction that, given access to the oracle of a QiO scheme, breaks the assumed security based on some polynomial-time hardness assumption. Consider a `trivial' polynomial-time adversary $A$ that selects two quantum circuits $C_{1}(x) = C_{2}(x)$, which are functionally equivalent except for differing outputs at some input $x^{*}$. This adversary can distinguish the obfuscations of $C_1$ and $C_2$ by evaluating them at $x^{*}$. To prevent from being fooled by adversary $A$, it is necessary to check whether the two quantum circuits $C_1$ and $C_2$ are functionally equivalent for all inputs. A natural approach is to traverse all inputs one by one, which is also the implicit strategy in the security proofs of all existing generic QiO constructions. Consider a simple scenario where the outputs of two quantum circuits $C_1$ and $C_2$ differ only at a specific $x^{*}$ for all inputs. If the input space is two-dimensional, traversing all possible inputs one by one requires checking $2^n$ different inputs, where $n$ is the input length, leading to exponential computational complexity. This strategy has also been named by Jain et al. \cite{2022jain} as suffering from the security loss with input length increases exponentially. Therefore, in the design of QiO, checking functional equivalence is necessary. Without such verification, it is impossible to ensure consistent obfuscation in terms of functionality.

\subsection{Quantum Specification Based on Subpath Sum}
We employ the conventional notation for paths and symbols \cite{Amy2018}, which can be viewed as a compact representation of the state transition relations. Since the evolution operator $U$ is typically modeled as an operator on the finite-dimensional Hilbert space $\mathcal{H}$, we use discrete sums instead of integrals, aligning with recent research on quantum path semantics \cite{Montanaro_2017,2023Synthesizing}. Specifically, instead of providing a separate path integral for each transition amplitude, we use a set of paths (denoted by the symbol $\Lambda$) connecting the basis states, paired with a set of amplitudes $(\left. \phi:\Lambda\rightarrow\mathbb{~}\mathbb{C} \right.)$, where $\mathbb{C}$ represents the complex field. This set of paths and amplitudes naturally corresponds to the operator $U = {\sum\limits_{\tau:x_{0}\rightarrow x \in \Lambda}{\phi(\tau)\left. \left| x \right. \right\rangle\left\langle \left. x_{0} \right| \right.}}.$

Here $\left. \tau:x_{0}\rightarrow x \right.$ denotes that $\tau$ is a path from $\left| x_{0} \right\rangle$ to $\left| x \right\rangle$. Any unitary matrix $U \in U\left( 2^{n} \right)$ can also be described as a sum-over-paths action by including a path $\left. \Lambda_{x,~x^{\prime}}:x\rightarrow x^{\prime} \right.$ in $\Lambda$ for every pair $x,~x^{\prime} \in \mathbb{F}_{2}^{n}$, where $\phi\left( \Lambda_{x,~x^{\prime}} \right) = \left\langle x^{\prime} \middle| U \middle| x \right\rangle$. The operator $U$ can be represented as an action on the basis states,
\resizebox{0.48\textwidth}{!}{$\left. U:\left. \left| x_{0} \right. \right\rangle\mapsto{\sum\limits_{x \in \mathbb{F}_{2}^{n}}{\left( \left\langle x \middle| U \middle| x_{0} \right\rangle \right)\left. \left| x \right. \right\rangle}}\mapsto{\sum\limits_{\tau:x_{0}\rightarrow x \in \Lambda_{x_{0}}}{\phi(\tau)\left. \left| x \right. \right\rangle}} \right.$}.

It is referred to as a sum-over-paths action defined by $\Lambda$ and $\phi$, where $\Lambda_{x_{0}} \subseteq \Lambda$ represents the paths starting from $\left| x_{0} \right\rangle$.

In a general sense, the idea involves considering the computational basis states as physical states of a quantum bit system. A unitary operator maps a specific state to a superposition of a series of states, each with a specific probability amplitude. The mapping process can be viewed as the evolution of an initial state (possibly in a superposition) to other states through a combination of paths.

The path of $X$, $H$, and $R_z$ gates are represented as follows:
$
\left. X:\left. \left| x \right. \right\rangle\rightarrow\left. \left| 1 \oplus x \right. \right\rangle, \right.
\left. H:\left. \left| x \right. \right\rangle\rightarrow\frac{1}{\sqrt{2}}{\sum\limits_{y \in {\{ 0,1\}}}e^{i\pi xy}}\left. \left| y \right. \right\rangle, \right.\nonumber\\$
\resizebox{0.24\textwidth}{!}{$
\left. ~R_{z}^{\theta}:\left. \left| x \right. \right\rangle\rightarrow e^{i{({2x - 1})}\theta}\left. \left| x \right. \right\rangle,~ \right. $}
\resizebox{0.23\textwidth}{!}{$
\left. \left. CX:~ \middle| x_{1}x_{2} \right\rangle\rightarrow \middle| x_{1}~\left( x_{1} \oplus x_{2} \right)\rangle \right.
$}

Applying the $X$ gate to the basis state $\left| x \right\rangle$ results in the state $\left| \neg x \right\rangle$. Applying $H$ to $\left| \neg x \right\rangle$ results in the state $\frac{1}{\sqrt{2}} |0\rangle + \frac{1}{\sqrt{2}} e^{i\pi x} |1\rangle$. The $R_z$ gate, parameterized by the angle $\theta$, only changes the amplitude of a given basis state, as seen in $e^{i(2x - 1)\theta}$, where $2x-1$ is actually a coefficient related to $x$. Here $x$ represents the bit value in the computational basis $\left| x \right\rangle$, which can be 0 or 1. The value of $2x-1=-1$ when $x=0$ and $2x-1=1$ when $x=1$. Its value is related to the direction of rotation of the qubit around the $z$-axis. The controlled $X$ $(CX)$ gate can entangle two qubits, a crucial operation in quantum computing: $\left. \left| x_{1}x_{2} \right\rangle\rightarrow \middle| x_{1}~\left( x_{1} \oplus x_{2} \right)\rangle \right.$ Given the basis state $\left| x_{1}x_{2} \right\rangle$, $CX$ generates the basis state $\left| x_{1}\left( x_{1} \oplus x_{2} \right) \right\rangle$, where $\oplus$ is the XOR operation.

The path sum computation represents each quantum gate in the quantum circuit $C$ by acting on the computational basis states. It consists of deterministic output Boolean functions and semi-Boolean functions that determine the relative phases of the outputs, with each output depending on the input and `path'. A semi-Boolean function is defined as a mapping from a binary input set $\{0,1\}^n$ to the set of complex numbers, capturing the interaction between amplitude and phase for specific quantum states. We divide the time interval $[t_a, t_b]$ over which the quantum circuit $C$ evolves into $N$ grid points, where the specific division is given by $\frac{t_{b} - t_{a}}{N} = \Delta t,t_{m} \coloneqq t_{a} + m \Delta t$, resulting in a series of time steps $t_{a} = t_{0} < t_{1} < \cdots < t_{N - 1} < t_{N} = t_{b}$.

Subpath sum origins from the description of quantum circuit functional verification by Feynman path integral \cite{Amy2018}, and we connect path sum and quantum Fourier expansion, representing the action of relevant operators by informally expressing path sums. The discrete fourier transform is applied to the quantum state amplitudes within each time step $\mathrm{\Delta}t$ to represent the action on the computational basis states.
\begin{equation}
\left. \left. \left| x \right. \right\rangle\rightarrow\frac{1}{\sqrt{2^{n}}}{\sum\limits_{y = 0}^{2^{n} - 1}e^{2\pi ixy/2^{n}}}\left. \left| y \right. \right\rangle \right.
\end{equation}

Fig. \ref{figqft} illustrates a 3-qubits quantum Fourier transform circuit. The controlled-phase rotation gates are typically represented as Control-$R_k$ gates, $R_{k} = \begin{bmatrix}
1 & 0 \\
0 & e^{2\pi i/2^{k}}
\end{bmatrix}$, where $k$ denotes the angle of phase rotation. Different values of $k$ correspond to different phase components.

\begin{figure}[h]

\hspace{1.5cm} 
\resizebox{0.5\linewidth}{!}{
\Qcircuit @C=0.5em @R=0.7em {
\lstick{\ket{j_i}} & \gate{H} & \ctrl{1}  & \ctrl{2} &  \qw   &  \qw        &\qw&\rstick{\ket{0}+e^{0.j_1j_2j_3}\ket{1}} \qw \\
\lstick{\ket{j_2}} &     \qw  & \gate{R_2}& \qw    &    \gate{H}  & \ctrl{1} & \qw&\rstick{\ket{0}+e^{0.j_2j_3}\ket{1}} \qw \\
\lstick{\ket{j_3}} &     \qw  &  \qw      & \gate{R_3}  &  \qw     &\gate{R_2} &\gate{H}&\rstick{\ket{0}+e^{0.j_3}\ket{1}} \qw 
}}
\caption{3-qubits quantum Fourier transform circuit}
\label{figqft}
\end{figure}
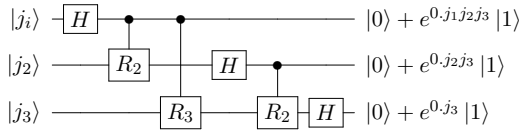


\begin{definition}
\label{def_SP}
(Subpath Sum \( SP \)): The subpath sum \( SP \) of a quantum circuit \( C \) at time \( t_m \) is the set of exponentiated sums of the basis states at time \( t_m \). The domain of \( SP \) consists of computational basis states parameterized by input variables \( x \) and path variables \( y \), where each basis state represents a possible computational path in the circuit.
\begin{itemize}
  \item Input basis vector: \( \left| x = x_{1}x_{2}\cdots x_{n} \right\rangle \), where each \(x_i\) is a Boolean variable representing the \(i\)-th input qubit in \( C \), and \(n\) is the number of input qubits.
  \item Phase polynomial: \( \phi \in D_{M}[x,y] \) is a linear polynomial over the input variables \( x \) and path variables \( y = y_{1}y_{2}\cdots y_{m} \), where \( m \) is number of intermediate path variables in circuit. Phase polynomial encodes relative phases accumulated along each computational path.
  \item Output basis vector: \( \left| f(x,y) = f_{1}(x,y)\cdots f_{n}(x,y) \right\rangle \), where each \( f_i \in \mathbb{Z}_2[x,y] \) is a Boolean polynomial representing the \(i\)-th output qubit as a function of the inputs and path variables.
  \item Functionality:The subpath sum \( SP \) is generated by the unitary evolution of the quantum circuit \(C\) from the input basis state \( \left| x \right\rangle \) to the output basis state \( \left| f(x,y) \right\rangle \). The corresponding unitary operator \( U_{SP} \) is defined as: $\left| x \right\rangle \rightarrow \frac{1}{\sqrt{2^m}} \sum_{y \in \mathbb{Z}_2^m} e^{2\pi i \phi(x,y)} \left| f(x,y) \right\rangle$

\end{itemize}
\end{definition}

Some subpath sums within their evolution $\Delta t$ do not transform the basis state into a superposition state because they represent monomial unitary operators. The subpath sum can be simplified to be $\left. \left| x \right\rangle\rightarrow\phi(x) \middle| f(x)\rangle \right.$

For example, the unitary operator representation of the CX gate in terms of single-qubit states is given by $\left. CX = \middle| 0 \right\rangle\left\langle 0 \middle| \otimes I + \middle| 1 \right\rangle\left\langle 1 \middle| \otimes X \right.$

If each element of a unitary operator can be expressed as the product of a single-qubit state and a coefficient, then the unitary operator is a monomial unitary operator. Here $\otimes$ denotes the tensor product, $I$ is the identity unitary for a single qubit, and $X$ is the Pauli-$X$ gate. The subpath sum representation of the $CX$ gate is given by $\left. \left. \left| x_{1}x_{2} \right. \right\rangle\rightarrow\left. \left| x_{1}\left( x_{1}\bigoplus x_{2} \right) \right. \right\rangle \right.$. Thus, $\phi = 1$, there are no path variables $y$, and $f = \left( f_{1},f_{2} \right)$, where $f_{1}\left( {x_{1},x_{2}} \right) = x_{1}$ and $f_{2}\left( {x_{1},x_{2}} \right) = x_{1}\bigoplus x_{2}$.

The quantum circuit, also known as quantum logic circuit, is the most commonly used model for universal quantum computing. It represents, in abstract terms, a circuit that operates on quantum qubits.
As a quantum process, the operation of a quantum circuit can also be described as the sum of all transitions between intermediate states.
%
\begin{figure}[h]
\centering
\includegraphics[width=0.95\linewidth]{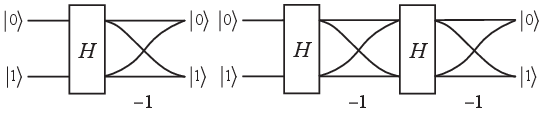}
\caption{Schematic diagram illustrating the paths taken by each input state and the phases acquired along each subpath in $HH$ circuit}
\label{fig5}
\end{figure}
\begin{theorem}
\label{theo_qlogic}
(Quantum Circuit Path Representation): Let \( C \) be a quantum circuit composed of \( n \) input qubits and \( m \) intermediate qubits represented by path variables \( y = y_1 y_2 \cdots y_m \). The output amplitude of \( C \) for a given input state \( |x\rangle \) and output state \( |f(x, y)\rangle \) is given by:
\begin{equation}
\left\langle f(x, y) \middle| C \middle| x \right\rangle = \frac{1}{\sqrt{2^m}} \sum_{y \in \mathbb{Z}_2^m} \alpha(x, y) e^{2\pi i \phi(x, y)},
\end{equation}
where \( \phi(x, y) \) is the phase polynomial encoding the phase contribution of each path.
\end{theorem}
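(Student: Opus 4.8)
The plan is to build the path-sum representation of $C$ by induction on the circuit structure, tracking how the phase polynomial $\phi(x,y)$ and the Boolean output functions $f_i(x,y)$ accumulate as gates are composed. First I would set up the base case: a single-qubit or two-qubit gate drawn from the generating set $\{X, H, R_z^\theta, CX\}$ already admits the stated form, as recorded in the path equations preceding Definition~\ref{def_SP} — $X$ and $CX$ contribute no path variable and $\phi$ stays constant (a monomial unitary), $R_z^\theta$ multiplies by a phase linear in the bit value, and $H$ introduces one fresh path variable $y_j$ with a $\tfrac{1}{\sqrt 2}$ normalization and a contribution $e^{i\pi x y_j}$ to the phase. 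Each of these matches $|x\rangle \mapsto \tfrac{1}{\sqrt{2^{m}}}\sum_{y} \alpha(x,y)\, e^{2\pi i\phi(x,y)} |f(x,y)\rangle$ with $m\in\{0,1\}$.

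Next I would handle the two composition operations that generate an arbitrary circuit from these primitives: tensor product (placing gates in parallel on disjoint wires) and sequential composition (stacking gates in time). For the tensor product, the path variable sets are disjoint, the phase polynomials add, the output functions concatenate, and the normalization factors multiply, so the form is preserved with $m = m_1 + m_2$. For sequential composition $C = C_2 \circ C_1$, I would substitute the output basis vector $|f^{(1)}(x,y^{(1)})\rangle$ of $C_1$ into the input slot of $C_2$'s representation; this replaces the input variables of $C_2$ by the Boolean polynomials $f^{(1)}_i$, introduces a fresh batch of path variables $y^{(2)}$, and yields a combined phase $\phi^{(1)}(x,y^{(1)}) + \phi^{(2)}(f^{(1)}(x,y^{(1)}), y^{(2)})$ together with combined output $f^{(2)}(f^{(1)}(x,y^{(1)}), y^{(2)})$. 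The amplitude prefactor $\alpha$ absorbs any residual monomial coefficients (e.g. from $X$ or $CX$ layers) and the combined $\tfrac{1}{\sqrt{2^{m_1+m_2}}}$ normalization. Summing over all intermediate path variables and reading off the $\langle f(x,y)|C|x\rangle$ matrix element then gives exactly the claimed identity.

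The main obstacle I anticipate is the claim, implicit in Definition~\ref{def_SP}, that $\phi$ remains a \emph{linear} polynomial over $x$ and $y$ throughout. Under sequential composition the substitution $x_i \mapsto f^{(1)}_i(x,y^{(1)})$ feeds Boolean polynomials — which for circuits containing $CX$ and Toffoli-like reductions need not be linear — into the phase polynomial of the next block, so linearity is not obviously preserved; genuine path-sum calculi (as in \cite{Amy2018}) only keep $\phi$ in a bounded-degree or "pseudo-Boolean" normal form, using identities like $2xy = x + y - (x\oplus y)$ to re-linearize degree-raising terms at the cost of new path variables. I would therefore either (i) restrict the statement to the phase polynomial living in the ring $D_M[x,y]$ modulo the Boolean relations $x_i^2 = x_i$, treating $\phi$ as a pseudo-Boolean function rather than strictly affine, or (ii) carry an explicit invariant through the induction bounding $\deg\phi$ and accounting for the extra $H$-introduced variables needed to reduce it. A secondary, more routine point is bookkeeping the normalization: one must check that the number of path variables $m$ equals the number of $H$ gates (more generally, the number of Fourier-transform steps per Eq.~(8)), so that the global $\tfrac{1}{\sqrt{2^m}}$ is correct and $\alpha(x,y)$ is a pure phase (indeed $\pm 1$ or a root of unity) rather than hiding additional amplitude. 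Once the linearity/degree issue is pinned down, the rest is a straightforward structural induction.
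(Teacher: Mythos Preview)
Your proposal is correct and in fact more rigorous than the paper's own argument, but it takes a genuinely different route. The paper does not carry out a structural induction at all: its ``proof'' restates the path-sum expression, works the $HH$ example to exhibit constructive and destructive interference, and then sketches how a non-Clifford gate such as $T$ shifts the phase polynomial to $\phi'(x,y) = \phi(x,y) + \tfrac{\pi}{4}\sum_i y_i$. It is essentially an explanation by worked example rather than an inductive derivation. Your plan---base case on the generating gates, closure under tensor and sequential composition, with explicit substitution $x_i \mapsto f^{(1)}_i(x,y^{(1)})$---is the standard way this is actually established in the path-sum literature (e.g.\ \cite{Amy2018}) and would yield a genuine proof. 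The obstacle you flag about $\phi$ remaining linear is real and is simply glossed over in the paper; your option~(i), treating $\phi$ as pseudo-Boolean in $D_M[x,y]$ modulo $x_i^2=x_i$, is the right fix and is consistent with how Definition~\ref{def_SP} should be read. In short: the paper's approach buys intuition at the cost of rigor; yours buys a proof at the cost of the bookkeeping you already anticipate.
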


\begin{proof}
Let \( C \) be a quantum circuit with \( n \) input qubits and \( m \) intermediate path variables \( y = y_1 y_2 \cdots y_m \). For a given input state \( |x\rangle \), the evolution of \( C \) can be described as a sum over all computational paths indexed by \( y \). Each path \( y \) accumulates a phase contribution encoded by the polynomial \( \phi(x, y) \), giving the overall amplitude: $\left\langle f(x, y) \middle| C \middle| x \right\rangle = \frac{1}{\sqrt{2^m}} \sum_{y \in \mathbb{Z}_2^m} \alpha(x, y) e^{2\pi i \phi(x, y)}.$
Here, \( \alpha(x, y) \) represents the amplitude of each path (typically \( 1 \) for basic Clifford circuits), and \( \phi(x, y) \) is a polynomial function of \( x \) and \( y \).

To understand interference effects, we analyze the specific example of a single-qubit Hadamard gate:
$H |x\rangle = \frac{1}{\sqrt{2}} \sum_{y \in \{0, 1\}} e^{i\pi xy} |y\rangle.$
The phase polynomial is \( \phi(x, y) = \pi x y \), where \( y \) takes values in \( \{0, 1\} \). When \( x = 0 \), the phase \( \phi(0, y) = 0 \) for all \( y \), leading to constructive interference:
$\langle 0 | H H | 0 \rangle = \frac{1}{\sqrt{2}} \cdot \frac{1}{\sqrt{2}} (1 + 1) = \frac{1}{2}(2) = 1.$
For \( x = 1 \), the paths have phases \( \phi(1, 0) = 0 \) and \( \phi(1, 1) = \pi \), leading to destructive interference:
$\langle 0 | H H | 1 \rangle = \frac{1}{\sqrt{2}} \cdot \frac{1}{\sqrt{2}} (1 - 1) = \frac{1}{2}(0) = 0.$

For a general circuit, the path amplitude \( \alpha(x, y) \) and the phase polynomial \( \phi(x, y) \) are affected by both Clifford and non-Clifford gates. For instance, a non-Clifford \( T \)-gate introduces a phase shift of \( \frac{\pi}{4} \) when applied to the \( |1\rangle \) state. Specifically, let \( G = \begin{bmatrix} 1 & 0 \\ 0 & e^{i\theta} \end{bmatrix} \) be a non-Clifford gate with angle \( \theta \). When applied to a path variable \( y_i \), the amplitude changes from \( \alpha(x, y) = 1 \) (for Clifford gates) to \( \alpha(x, y) = e^{i\theta y_i} \). For example, the output amplitude for a path passing through a \( T \)-gate is given by $ \sum_{y \in \{0, 1\}^m} e^{i\sum_{i=1}^{m} \frac{\pi}{4} y_i} e^{2\pi i \phi(x, y)}$,
which accumulates phases \( \frac{\pi}{4} \) for each \( y_i = 1 \). This phase accumulation affects the total interference pattern, leading to a modified polynomial \( \phi'(x, y) = \phi(x, y) + \frac{\pi}{4} \sum_{i=1}^{m} y_i \), which changes the constructive and destructive interference conditions. Thus, calculating the overall amplitude involves summing over all paths with modified phase contributions, expressed as: $\left\langle f(x, y) \middle| C \middle| x \right\rangle = \frac{1}{\sqrt{2^m}} \sum_{y \in \{0, 1\}^m} e^{2\pi i \phi'(x, y)}.$
This approach systematically handles arbitrary circuits by tracking how non-Clifford gates affect each path, allowing us to analyze complex interference patterns.
\end{proof}

Clifford+$Rz$ circuits have a phase function $\phi$ that implies a natural equivalence relation: $\phi(x,~y)~ = \phi'(x,~y)~mod~2\pi$. This means that for all $x$ and $y$, the results of $\phi(x,~y)$ and $\phi'(x,~y)$ are equivalent when taking modulo $2\pi$. Sometimes, it is more convenient to represent the phase function using a group isomorphic to $G$ or $G/2\pi\mathbb{Z}$.
For example, when dealing with a broader set of Clifford+$T$ gates, by restricting the phase group to $\pi/4\mathbb{Z}$ might be more convenient. Additionally, due to the isomorphism $\frac{\pi}{4}\mathbb{Z}/2\pi\mathbb{Z} \simeq \mathbb{Z}_{8}$, the phase function of Clifford+$T$ circuits is isomorphic to some $\mathbb{Z}_8$-valued function.

The phase term induced by the $T = R_z^{2\pi/8}$ gate, considering the evolution of subpaths through the $THT$ circuit, as shown in Fig. \ref{fig6}, where $\omega = e^{i\pi/4}$. If we slide the phase of the first T along the subpath, i.e., exchange the first $\omega$ phase with the right Hadamard gate, it duplicates $\omega$ onto the two output paths, effectively acting as a $\omega I$ gate, which adds an additional $\omega$ phase to the paths starting from $\left| 0 \right\rangle$. To maintain the logical functionality of the circuit, one can cancel the extra $\omega$ phase by applying an additional $\omega^{-1}$ phase on $\left| 0 \right\rangle$ (e.g., $X^{\dagger}TX$).

\begin{figure}[h]
\centering
\includegraphics[width=1\linewidth]{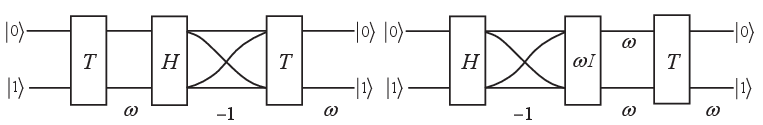}
\caption{Schematic diagram illustrating the paths taken by each input state and the phases acquired along each subpath in $THT$ circuit}
\label{fig6}
\end{figure}

Tracking the evolution of each basis state through the application of phase contributions for each quantum gate, as shown in Fig. \ref{fig8}, the red and blue paths each acquire a total phase of $\pi/2$. This corresponds to the basis states $\left. \left| 01 \right. \right\rangle$ and $\left. \left| 10 \right. \right\rangle$ for which $x_{1} \oplus x_{2} = 1$.

\begin{figure}[h]
\centering
\includegraphics[width=0.9\linewidth]{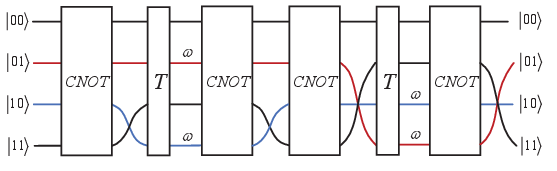}
\caption{Schematic diagram illustrating the paths taken by each input state and the phases acquired along each subpath}
\label{fig8}
\end{figure}
 



\subsection{Incremental Changes of Subpath-Equivalence}
\label{section3.1}
$\Delta$Subpath-equivalence is defined through the subpath sum $SP$. Two quantum circuits are $\Delta$subpath-equivalence if they have two functionally equivalent subpath sums of polynomial size at the same location.

\begin{definition}
\label{def_speSP}
(\(\Delta\)subpath-equivalence based on Subpath Sums \(SP\)): Let \( C_1 \) and \( C_2 \) be two quantum circuits, and let \( SP_1 \) and \( SP_2 \) be their respective subpath sums as defined in Definition 6. The circuits \( C_1 \) and \( C_2 \) are said to be \(\Delta\)subpath-equivalent if there exists a subpath sum \( \Delta SP \subseteq SP \) such that:

\begin{itemize}
\item
   Define the subpath sum operators for the two circuits \( C_1 \) and \( C_2 \) as: 
\begin{equation}
\resizebox{0.43\textwidth}{!}{$
U_{\Delta SP_{1/2}} = \frac{1}{\sqrt{2^m}} \sum_{y \in \mathbb{Z}_2^m} e^{2\pi i \phi_{1/2}(x,y)} \left| f_{1/2}(x,y) \right\rangle \left\langle x \right|$}
\end{equation}

where \( x = x_1 x_2 \cdots x_n \) is the input basis vector (each \( x_i \) is a Boolean constant or variable). \( y = y_1 y_2 \cdots y_m \) are the path variables corresponding to intermediate qubits. \( \phi_1(x, y) \) and \( \phi_2(x, y) \) are the phase polynomials (defined in Definition 6) that describe the phase contribution of the subpath sum for \( C_1 \) and \( C_2 \). \( f_1(x, y) \) and \( f_2(x, y) \) are Boolean polynomials describing the output basis states of the circuits.

\item 
The operators corresponding to the path sums outside \( \Delta SP \) must be identical for both circuits: 
$U_{\Delta SP_{1} \notin SP_1} = U_{\Delta SP_{2} \notin SP_2},$
where \( U_{\Delta SP_{1} \notin SP_1} \) and \( U_{\Delta SP_{2} \notin SP_2} \) are the linear operators defined by the path sums outside the region \( \Delta SP \).

\item 
The subpath sum operators \( U_{\Delta SP_1} \) and \( U_{\Delta SP_2} \) must be equivalent: 
$U_{\Delta SP_1} = U_{\Delta SP_2}.$
\end{itemize}

For Clifford circuits, the phase polynomial \(\phi(x, y)\) is typically a linear or quadratic form, making the operators \(U_{\Delta SP_1}\) and \(U_{\Delta SP_2}\) relatively easy to analyze. However, for general circuits, \(\phi(x, y)\) may include higher-order terms or even non-polynomial forms. This complexity can be handled by subdividing \( \Delta SP \) into smaller segments where each segment \( \Delta SP_k \) corresponds to a simpler, localized phase polynomial. The equivalence condition then translates into ensuring that the total phase evolution along the path is preserved between \( C_1 \) and \( C_2 \). Specifically, if the cumulative phase difference \(\Delta \phi(x, y)\) between \( U_{\Delta SP_1} \) and \( U_{\Delta SP_2} \) satisfies \(\Delta \phi(x, y) \equiv 0 \mod 2\pi\), the two circuits are said to be $\Delta$subpath-equivalent.

\end{definition}

Because we do not require the subset $SP$ to be connected, the path sum can be divided into two different equivalent forms: vertical equivalence and horizontal equivalence. Vertical equivalence occurs when subpath sums are sequentially summed within the same quantum subsystem, maintaining the same cumulative phase contributions. Formally, two circuits $C_1$ and $C_2$ are vertically equivalent if their subpath sum operators can be represented as a sequence of operators $U_{SP_1^1}, U_{SP_1^2}, \ldots, U_{SP_1^k}$ and $U_{SP_2^1}, U_{SP_2^2}, \ldots, U_{SP_2^k}$, respectively, such that for all $j$, $U_{SP_1^j} = U_{SP_2^j}$. The overall equivalence condition is given by $U_{SP_1} = U_{SP_2}$. On the other hand, horizontal equivalence describes equivalent subpath sums located on different subsystems. If there exists a mapping $\sigma: Q_1 \to Q_2$ such that $U_{SP_1} \circ \sigma = U_{SP_2}$, where $Q_1$ and $Q_2$ are the qubit sets on which $SP_1$ and $SP_2$ act, then $C_1$ and $C_2$ are said to be horizontally equivalent.
Since the path sum symbols describe mapping relation between linear combinations of basis vectors, the output $\left. \left| f(x,y) \right. \right\rangle$ of one path sum can be composed with the input $\left. \left| x^{\prime} \right. \right\rangle$ of another path sum by replacing each input value $x_{i}^{\prime}$ with the corresponding output $f_{i}(x,y)$.
For example, for $U_{SP} : \left| x_{1}x_{2}x_{3} \right\rangle \rightarrow \left| x_{1}\left( x_{1}\oplus x_{2} \right)x_{3} \right\rangle$, the calculation of $U_{\Delta SP} : \left| x_{1}^{\prime}x_{2}^{\prime}x_{3}^{\prime} \right\rangle \rightarrow \left| x_{1}^{\prime}x_{2}^{\prime}\left( x_{2}^{\prime}\oplus x_{3}^{\prime} \right) \right\rangle$ can use $x_{1}\bigoplus x_{2}$ instead of $x_{2}^{\prime}$ to combine, as follows $\left| x_{1}x_{2}x_{3} \right\rangle \rightarrow \left| x_{1}\left( x_{1}\oplus x_{2} \right)\left( x_{1}\oplus x_{2}\oplus x_{3} \right) \right\rangle \nonumber, U_{ ~SP + \Delta SP} = U_{SP}U_{\Delta SP}$
This approach allows for the long-distance cancellation of phase gates applied to the same logical state. 

\begin{figure}[h]
  \centering
  \subfloat[Example circuits with 3 qubits]{\includegraphics[width=0.42\linewidth]{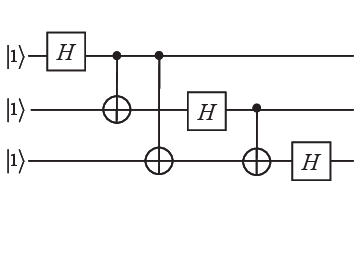}\label{fig14a}}\hspace{0.05\linewidth}
  \subfloat[The corresponding path structure of the example circuit]{\includegraphics[width=0.42\linewidth]{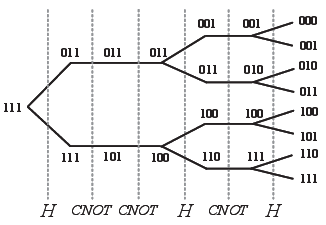}\label{fig15b}}
  \caption{An intuitive representation of semantic transformations. Example circuits with 3-qubits and their corresponding path structures are included.}
   \label{fig14}
\end{figure}

As shown in Fig. \ref{fig14}.(a) and (b), we provide an intuitive representation of a 3-qubits sample quantum circuit in terms of subpath sum semantics. If a quantum gate is branching that can map states to superposition states, such as the $H$ gate, two edges are drawn to represent two different equally weighted subpaths. If the quantum gate is non-branching, a single edge is drawn to represent a single mapping of phase contribution. At the end of each edge, a single state generated by traversing along that path is labeled.

 A loop $\Delta SP$ is represented as a closed path segment in the subpath sum representation, where a set of unitary operations $U_{1}, U_{2}, \ldots, U_{k}$ maps an initial state $\left| \psi_{in} \right\rangle$ back to itself, i.e., $U_{k} U_{k-1} \cdots U_{1} \left| \psi_{in} \right\rangle = \left| \psi_{in} \right\rangle$. In the context of subpath sums, loops introduce phase interference and play a crucial role in controlling the transformation properties of the circuit. By constructing loops strategically, it is possible to achieve phase cancellations or amplifications that are crucial for the indistinguishability properties of QceQiO.

The path structure with added loops and its corresponding quantum circuit are presented as shown in Fig. \ref{fig18}.(a). The loops $\Delta SP$ are added at different positions in the path structure, yet they still form loops between the blue and red lines, achieving long-distance phase cancellation. The added $\Delta SP$ can also be represented in the quantum circuit as quantum gates. In addition, as shown in Fig. \ref{fig18}.(b), parts of the added loops can also act on the corresponding qubit to change quantum state of quantum implementation. 
We all take half of the loop and act it to the quantum state, so that on the one hand, it is convenient to express the half of the loop with $U^{\dagger}$ and $U$ when writing, and on the other hand, the quantum gate formed in this way has low construction cost under the existing quantum system. In fact, each part can be divided unequally.

\begin{figure}[h]
  \centering
    \subfloat[Adding loops to path structure for long-distance phase cancellation]{\includegraphics[width=0.42\linewidth]{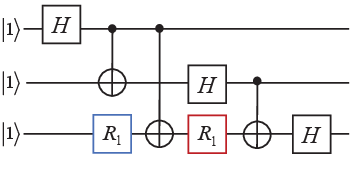}\label{fig:a}\hspace{0.05\linewidth}
  \includegraphics[width=0.43\linewidth]{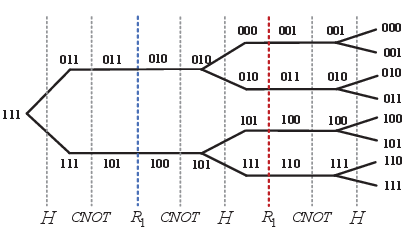}\label{fig:b}}
  \newline
   \subfloat[Incorporating loops into quantum states from path structure]{\includegraphics[width=0.42\linewidth]{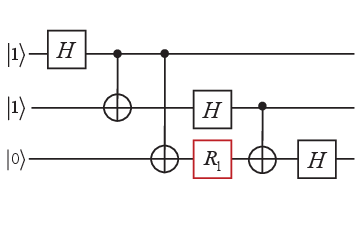}\label{fig:a}\hspace{0.05\linewidth}
  \includegraphics[width=0.43\linewidth]{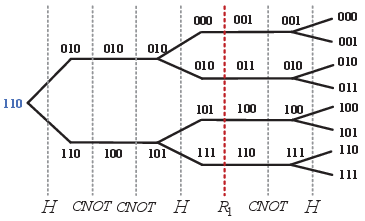}\label{fig:b}}
  \caption{Path structure with added loops and its corresponding example circuit.}
   \label{fig18}
\end{figure}

Then, we propose a specific construction scheme for QceQiO. The QceQiO obfuscator selects the identity subpaths sum that can form a loop as the incremental unit to construct the quantum logic obfuscation circuit for the corresponding target.

\begin{figure}[h]
\centering
\includegraphics[width=0.95\linewidth]{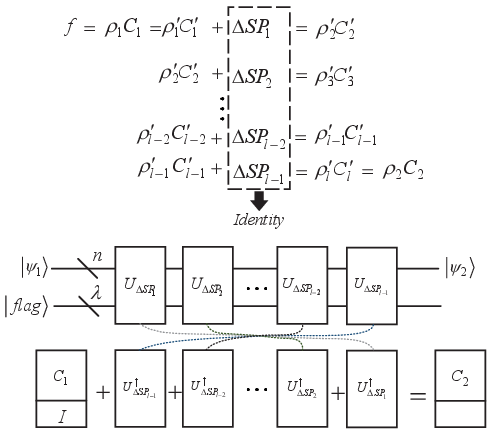}
\caption{Construction of QceQiO}
\label{fig10}
\end{figure}

\begin{definition}
\label{consqceQiO}
(Construction of QceQiO): Two quantum circuit clusters $\{C_{\lambda}^{1}\}_{\lambda \in \mathbb{N}}$ and $\{C_{\lambda}^{2}\}_{\lambda \in \mathbb{N}}$ are considered obfuscatable if there exists a polynomial $\ell = \ell(n)$ and a function $B = O(logn)$ such that, for any positive integer $n$, there are $\ell \geq 2$ intermediate circuits $C_{1}^{\prime},C_{2}^{\prime},\cdots,C_{\ell}^{\prime}$ and a series of subpaths $\left\{ {\mathrm{\Delta}SP}_{i} \right\}_{i \in {\lbrack{\ell - 1}\rbrack}}$ of size $B$. Here, $C_{1}^{\prime} = C_{\lambda}^{1},C_{\ell}^{\prime} = C_{\lambda}^{2}$, and for any $i \in \left\lbrack \ell - 1 \right\rbrack$, $C_i$ is $\delta$-equivalent to $C_{i+1}$ through the subpath ${\mathrm{\Delta}SP}_{i}$.
\end{definition}

In simpler terms, two quantum circuits are considered logically obfuscatable if one can be transformed into the other through a polynomial number of steps. Each step involves modifying a subpath of polynomial size while guaranteeing the functional equivalence of subpath sums.
The specific construction of QceQiO is shown in Fig. \ref{fig10}. The QceQiO takes as input the quantum implementation $(\rho,C)$ of a function $\left. f:x\rightarrow y \right.$ and the parameter $1^{\lambda}$, where $C$ is a universal quantum circuit.
\begin{itemize}
\item{Step 1.} Prepare a new quantum state $\rho \otimes \left( \left. \left| flag \right. \right\rangle^{\otimes \lambda} \right)$ and a quantum circuit $C \otimes \left( I^{\otimes \lambda} \right)$ based on the quantum state $\rho$. The flag, denoted as $\left| flag \right\rangle^{\otimes \lambda}$, is an auxiliary quantum state used to control and track subpath sum transformations in the QceQiO construction. It serves as a marker to ensure that each subpath transformation is applied coherently and consistently. The cost of preparing and manipulating the flag is $O(\log n)$ single-qubit gates, while the cost of applying $U_{\Delta SP_i/2}$ to the flag and input state is polynomial in the number of qubits.

\item{Step 2.} For each randomly chosen $\mathrm{\Delta}{SP}_{i}$, $i \in \left\{ {1,\ldots,\ell} \right\}$, where each $\mathrm{\Delta}{SP}_{i}$ is a subpath sum of the identity transformation $I$. Apply half of it to $\rho \otimes \left( \left. \left| flag \right. \right\rangle^{\otimes \lambda} \right)$, resulting in the quantum state $\rho_{i}^{\prime} = U_{\mathrm{\Delta}{SP}_{i}/2}\left( \rho \otimes \left( \left. \left| flag \right. \right\rangle^{\otimes \lambda} \right) \right)$. As in Definition \ref{def_speSP}, it is not required that each $\Delta SP_i$ is connected, allowing for long-distance cancellation of phase gates applied to the same logical state. Each loop can be inserted at any intermediate point in the quantum subpath structure. 
\item{Step 3.} Insert the other half of the subpath $\mathrm{\Delta}{SP}_{i}$ into the paths of $C \otimes I^{\otimes \lambda}$, forming a new quantum circuit $C_{i}^{\prime} = U_{\mathrm{\Delta}{SP}_{i}/2}\left( C \otimes \left( I^{\otimes \lambda} \right) \right)$.
Similarly, other single-qubit gates related to phase, such as Z and S gates, can be constructed in this manner by adding identity transformations in the free slots of the quantum circuit, where the corresponding qubits should be in an idle/waiting state.
\item{Step 4.} Repeat Steps 2 and 3 $\ell$ times.
\item{Step 5.} Output the obfuscated quantum implementation $\left( \rho^{\prime},C^{\prime} \right)$. 
\end{itemize}

According to Definitions \ref{def_speSP}, \ref{qceQiO}, \ref{consqceQiO}, each intermediate implementation $\left( \rho_{i}^{\prime},C_{i}^{\prime} \right)$ is constructed through $\mathrm{\Delta}{SP}_{i}$ to achieve $\Delta$subpath-equivalence. This ensures the quantum implementations $\left( {\rho,C} \right)$ and $\left( \rho^{\prime},C^{\prime} \right)$ are obfuscatable.

\subsection{Distinguisher Reduction to Probability Test}
\label{section4.2}

To mitigate the security loss mentioned in the Definition \ref{qceQiO}, the functional equivalence of QiO can be probability test through a polynomial identity test on the subpath sums. In essence, random sampling of path variables of the quantum circuit is performed, and the subpath sums are checked for equality. If they are not equal, we have a counterexample. if they are equal, there is a high probability that the two quantum circuits are equivalent. We take inspiration from the idea of polynomial identity checking, which has applications in integrated circuit verification and semantics optimization. The correctness of polynomial identity testing is derived directly from the Schwartz-Zippel lemma \cite{Motwani_Raghavan_1995} and has been extended in the context of semantics optimization \cite{2023Synthesizing}. Here we reduce indistinguishability verification of quantum circuits problem into subpath sums equivalence problem and apply a probability test on the phase polynomials of subpath sums.
\begin{theorem}
\label{theo_protest}
(Probability Test for Subpath Sums): Consider two phase polynomials $\phi_{1}$ and $\phi_{2}:\phi \in D_{M}\left\lbrack {x,y} \right\rbrack$, involving the same set of $n$ variables. Let insert $\ell$ polynomials into $\phi_{1}$. Check whether $\phi_{1}(v) = \phi_{2}(v)$ holds for all variables $v \in \mathbb{C}^{n}$, succinctly denoted by $\phi_{1} = \phi_{2}$. Use $d$ to represent the maximum degree of these two polynomials.
Thus we can probabilistically verify the equivalence of $\phi_{1}$ and $\phi_{2}$ with high probability $\left. 1 - \ell^{2}d/ \middle| R \right|$.
\end{theorem}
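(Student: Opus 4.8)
The plan is to reduce the equivalence test to an instance of polynomial identity testing and then invoke the Schwartz--Zippel lemma. First I would form the difference polynomial $q := \phi_{1} - \phi_{2}$, so that $\phi_{1} = \phi_{2}$ as formal polynomials if and only if $q$ is the zero polynomial. Treating $q$ as an element of the polynomial ring over $\mathbb{C}$ in the $n$ variables $(x,y)$ (after clearing the common denominator $M$ so that the coefficients lie in a ring to which Schwartz--Zippel applies, and identifying phases only up to the natural $\bmod\, 2\pi$ equivalence built into $D_{M}[x,y]$), one direction is immediate: if the two subpath sums are genuinely equivalent then $q \equiv 0$, so every sampled point $v$ yields $q(v) = 0$ and the test never produces a false rejection.

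For the converse (soundness of a single round), I would argue that when $\phi_{1} \neq \phi_{2}$ the polynomial $q$ is nonzero with total degree at most $d$ by hypothesis; the Schwartz--Zippel lemma then bounds the probability that a uniformly random point $v \in R^{n}$ is a root of $q$ by $d/|R|$. Hence one round of the probability test detects an inequivalence with probability at least $1 - d/|R|$, and symmetrically, passing the test means $\phi_{1}=\phi_{2}$ with that confidence.

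Next I would lift this single-round bound to the full construction, which inserts $\ell$ subpath polynomials into $\phi_{1}$ via the incremental $\Delta SP_{i}$ transformations of Definition~\ref{consqceQiO}. Two ingredients combine here: (i) each insertion adds a phase polynomial of degree at most $d$ into the running phase polynomial, so after all $\ell$ steps the relevant polynomial has total degree at most $\ell d$; and (ii) the hybrid over the $\ell$ intermediate circuits $C_{1}',\dots,C_{\ell}'$ requires one probability test per step. Applying the degree-$\ell d$ Schwartz--Zippel bound in each of the $\ell$ tests and taking a union bound over the steps yields an overall error of at most $\ell \cdot (\ell d/|R|) = \ell^{2} d/|R|$, i.e.\ the procedure certifies equivalence with probability at least $1 - \ell^{2} d/|R|$, as claimed.

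The step I expect to be the main obstacle is controlling how the degree of the phase polynomial behaves under repeated subpath insertion: if the $\Delta SP_{i}$ were composed with one another arbitrarily, degrees could grow multiplicatively rather than additively, so I would need to use the structure of the construction — each $\Delta SP_{i}$ is a subpath sum of the identity of size $B = O(\log n)$ inserted as a ``loop'' that is cancelled long-distance — to show the cumulative degree stays $O(\ell d)$. A secondary point is checking that the Boolean output components $f_{1}, f_{2}$ agree exactly, so that equality of the phase polynomials genuinely certifies equivalence of the subpath sum operators rather than just of their phases; this follows from the $\Delta$subpath-equivalence conditions of Definition~\ref{def_speSP}. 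Finally, I would justify the appeal to Schwartz--Zippel over $\mathbb{C}$ (or a sufficiently large finite sampling set $R$) despite the phases living modulo $2\pi$, by scaling to integer-coefficient representatives before sampling.
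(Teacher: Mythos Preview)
Your overall strategy---reduce to polynomial identity testing and apply Schwartz--Zippel---matches the paper, but your derivation of the $\ell^{2}$ factor diverges from it and introduces an unnecessary complication. The paper does not track any degree growth of a ``running'' phase polynomial. Instead, it regards the $\ell$ inserted polynomials as $\ell$ separate objects, each of degree at most $d$, draws a \emph{single} random point $v\in R^{n}$, and checks $\phi_{i}(v)=\phi_{j}(v)$ for every pair $(i,j)$ with $i\neq j$. The union bound over the at most $\ell^{2}$ ordered pairs, each failing with probability at most $d/|R|$ by Schwartz--Zippel, yields $\ell^{2}d/|R|$ directly.

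Your route manufactures the same exponent as a product of two separate $\ell$'s: one from an alleged cumulative degree $\ell d$ and one from a hybrid over $\ell$ steps. The obstacle you flag---controlling degree under repeated subpath insertion---is precisely the step the paper never needs. Moreover, your degree claim is fragile as stated: if the insertions are genuine polynomial \emph{additions}, the degree stays at $d$, not $\ell d$ (and your bound would then collapse to $\ell d/|R|$, which does not match the statement); if they involve substitution or composition, the degree could grow faster than $\ell d$, which is the multiplicative blow-up you yourself worry about. The paper sidesteps this entire analysis by keeping the $\ell$ polynomials separate and paying the $\ell^{2}$ in the union bound over pairwise comparisons rather than in the degree of a single combined polynomial. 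Your discussion of clearing denominators, the $\bmod\,2\pi$ identification, and matching the Boolean output components is reasonable auxiliary hygiene, but it is orthogonal to where the $\ell^{2}$ actually comes from in the paper's argument.
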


\begin{proof}
Suppose we insert $\ell$ polynomials into $\phi_{1}$ that are inequivalent to each other on the same set of $n$ variables. Implicitly, the following random computation is performed:
\begin{itemize}
\item {Step 1.} Choose a finite subset $R \subset \mathbb{C}$.
\item {Step 2.} Draw $n$ independent values uniformly from the distribution over $R$, denoted by $v_{1},~.~.~.~,~v_{n}$.
\item {Step 3.} For each pair $\phi_{i}$ and $\phi_{j}$, where $i~ \neq ~j$, $i,j \in \left\{ 1,\ldots,\ell \right\}$, check whether $\phi_{i}(v) = \phi_{j}(v)$.
\end{itemize}

By the Schwartz-Zippel lemma (worst-case guarantee), if $\phi_{1} = \phi_{2}$, the algorithm returns True. This is because for any $v$, $\phi_{1}(v) = \phi_{2}(v)$. If $\phi_{1} \neq \phi_{2}$, then it returns True with a probability of at most $\left. d/ \middle| R \right|$, i.e., the probability that $\phi_{1}(v) = \phi_{2}(v)$ is at most $\left. d/ \middle| R \right|$. Conversely, it returns False with a probability of at least $\left. 1-d/ \middle| R \right|$. For $\phi_{1} \neq \phi_{2}$, the algorithm might return an incorrect answer (True) with low probability.

Thus, according to the union bound, for any pair of polynomials $\phi_{i}$ and $\phi_{j}$, where $i~ \neq ~j$, $i,j \in \left\{ 1,\ldots,\ell \right\}$. if $\phi_{i}(v) = \phi_{j}(v)$, return True. if $\phi_{i}(v) \neq \phi_{j}(v)$, the probability of returning True is  
\begin{equation}
\scalebox{1}{$
\begin{split}
& \Pr\left[\exists i \neq j. \phi_{i}(v) = \phi_{j}(v)\right] \\
&\leq \sum_{i,j \in [1,\ell],i \neq j} \Pr\left[\phi_{i}(v) = \phi_{j}(v)\right]
\leq {\ell^{2}d}/{|R|}
\end{split}$}
\end{equation}
Conversely, it returns False with a probability of at least $\left. 1 - \ell^{2}d/ \middle| R \right|$.
\end{proof}
For example, suppose $\phi_{1}$ and $\phi_{2}$ are inequivalent polynomials of degree 10. If we choose $R$ to be the set of 64 bits integers $({|R|}=2^{64} \approx 10^{20})$, we would have a failure probability of approximately $10^{-19}$. If we insert $10^{6}$ polynomials, each with a maximum degree of 10 and mutually inequivalent, into a new phase polynomial and use 64 bits integers as $R$, the probability of declaring a pair of polynomials as equivalent is approximately $10^{-7}$. It can be observed that if the polynomials are not equivalent, the total failure probability increases due to the possibility of false positives in each verification. It still provides a relatively low failure probability, considering factors such as the actual decoherence capabilities in quantum computing. The actual number of inserted phase polynomials for subpath sums is much smaller than $10^{6}$.

\section{Security analysis}
\label{section5}
In this section, we present the security analysis of QceQiO. The primary tools for the proof include the `Clifford twirl' \cite{ABOEM17} and the `admissible oracle lemma' \cite{GJMZ23}. The `Clifford twirl' is sufficient to establish security against adversaries performing a single query, but a more intricate argument is required to handle general adversaries. This argument leverages the `admissible oracle lemma'. We prove that the two quantum implementations of the classical function $f$ using QceQiO are indistinguishable in the quantum random oracle model. This result aligns with previous research in the field, which has extensively examined the security of quantum-cryptographic constructions in random oracle model \cite{Zhang22,AC12,BM22}.
\subsection{Scenario and Adversary-Challenger Game}
In this section, we consider a classical scenario in quantum cloud delegation computation \cite{Verification19} to illustrate the practical setting for QceQiO’s obfuscation. Quantum delegation protocols typically involve two parties: a verifier and a prover. The verifier’s goal is to delegate the execution of a complex quantum circuit $C$ to a prover while ensuring that the computation is performed correctly and securely. The verifier begins by preparing and sending the quantum state $| \phi \rangle$ and $C$ to the prover, who performs a unitary operation corresponding to $C$.

In the context of QceQiO, the delegation scenario serves as a security framework: the verifier wants to obfuscate the circuit $C$ in such a way that the prover cannot gain any knowledge about its structure or functionality, even if they have access to the obfuscated circuit. The verifier will then interact with the prover to confirm, with high probability, that the correct (obfuscated) circuit has been applied to the input. If any discrepancy is detected, the protocol will be aborted.

The concept of quantum copy protection based on hidden subspace oracles was introduced by Aaronson et al. \cite{AC12} and was later extended to the black box model \cite{ALL+21}. We adapt it here to suit quantum indistinguishability obfuscation.

\begin{definition}
\label{definition9}
(Quantum Random Oracle for Quantum Copy Protection): A quantum random oracle is a quantum polynomial-time algorithm  that satisfies following properties:

\begin{itemize}
\item Oracle Key Selection: The challenger randomly selects a key \( k \) and shares \( k \) with the oracle \( R \). Each possible key \( k \) specifies a unique \( n \)-qubit Haar-random unitary operation, denoted by \( U_n^k \), $k \rightarrow U_n^k$. The key length \( |k| \) is determined as \( O\left( \log\left( \left| U(2^n) \right| \right) \right) \).

\item Query Procedure: Assume there exists a quantum implementation \( \left( \left| \psi_{0} \right\rangle, C_{f} \right) \) of a function \( f \). An adversary queries the oracle \( R_{1},...,R_{q} \) with \( q \)-quantum queries. Each query begins with an initial state \( \left| \psi_{0} \right\rangle \), and the adversary applies a sequence of unitary transformations \( U_{0},...,U_{q} \) successively between the queries. Each \( U_i \) is a Haar-random unitary \( U_n^{k_i} \). The transformations are successively applied as $\left. U_{q}R_{q}~...U_{1}R_{1}U_{0} \middle| \psi_{0} \right\rangle.$

\item Oracle Application and Reverse Operations: In each query \( i \), the oracle \( R \) applies \( U_{i}^{\dagger}C_{f} \) to the received quantum state \( \left. U_{i} \middle| \psi_{0} \right\rangle \). The oracle undoes the effect of the Haar-random unitary \( U_n^{k} \) between queries, recovering the quantum implementation of function $f$.
\end{itemize}
\end{definition}

In the context of the quantum random oracle model described above, we design the adversary-challenger game for quantum copy protection.

\begin{definition}
(Quantum Copy Protection Adversary-Challenger Game):
\begin{itemize}
\item The challenger randomly selects a function $f$ from a set of classical circuits $\mathcal{F} = \left\{ \mathcal{F}_{\lambda} \right\}_{\lambda \in \mathbb{N}}$, where $\lambda$ is the security parameter. The challenger creates the quantum implementation $\sigma = CP(1^\lambda,f)$ of function $f$ using the copy protection scheme $CP$.
\item The challenger sends the quantum implementation $\sigma$ to the adversary, who maps this quantum implementation to a quantum state $\rho_{AB}$ on two registers $\mathcal{A}$ and $\mathcal{B}$ corresponding to two different subsystems.
The adversary sends to the challenger a description of the quantum circuit families for the subsystems $\mathcal{A}$ and $\mathcal{B}$, denoted by $C_A$ and $C_B$. 
\item The challenger executes the verification circuits ${Ver}_{\lambda}\hspace{0pt}(f,C_{A}) \otimes {Ver}_{\lambda}(f,C_{B})$ on the quantum state $\rho_{AB}$, where ${Ver}_{\lambda}$ is a polynomial-sized quantum circuit taking $f$ as input, along with the quantum implementation of $f$. It outputs a single bit. If both outputs of the verification circuits are 1, the challenger outputs 1, otherwise, it outputs 0.
\end{itemize}
\end{definition}
Copy protection is said to be $(Ver,\epsilon)$-secure if, for all quantum polynomial-time algorithms $Adv$, there exists a negligible function $negl$, such that for all $~\lambda$,
\begin{equation}
\Pr\left\lbrack Adv(CP\hspace{0pt}(1^\lambda,f) = \sigma )= 1 \right\rbrack \leq \epsilon(\lambda) + negl(\lambda).
\end{equation}

This game is designed to assess whether the copy protection scheme $CP$ can resist attacks from the adversary (Adv), where the verification circuit $Ver$ is employed to verify the correctness of the quantum implementation. The security of the game is controlled by the parameter $\epsilon$, representing the permissible error probability allowed by the challenger during the verification process.

Consider a quantum channel $Enc_s$(encoding) acting on the verifier's state, where s denotes a private random string sampled by the verifier from some distribution $p(s)$. Let Phonest represent the $CP$-complete mapping of the verifier's state, considering the honest behavior of the prover following the verifier's instructions in the protocol. Additionally, define ${P_{\hspace{0pt}}}_{incorrect}^{s} = \left( I - \middle| \Psi_{out}^{s} \right\rangle\left\langle \Psi_{out}^{s} \middle| \right)\left. \otimes \middle| {acc}^{s} \right\rangle\left\langle {acc}^{s} \right|.$
as the projection onto the orthogonal complement of the correct output \resizebox{0.5\textwidth}{!}{$
\left| \Psi_{out}^{s} \right\rangle\left\langle \Psi_{out}^{s} \middle| = {Tr}_{flag}\hspace{0pt}\left( \mathcal{P}_{honest}\left( {Enc}_{s}\left( \middle| \psi \right\rangle\left\langle \psi \middle| \otimes \middle| acc \right\rangle\left\langle acc \middle| \right) \right) \right) \right.$}
and, in the case of an accepted flag state \resizebox{0.5\textwidth}{!}{$
\left| {acc}^{s} \right\rangle\left\langle {acc}^{s} \middle| = {Tr}_{input}\hspace{0pt}\left( \mathcal{P}_{honest}\left( {Enc}_{s}\left( \middle| \psi \right\rangle\left\langle \psi \middle| \otimes \middle| acc \right\rangle\left\langle acc \middle| \right) \right) \right) \right.$}

Then, we say that such a protocol is $\epsilon$-Security: Consider a quantum channel $Enc_s$(encoding) acting on the verifier's state, where s denotes a private random string samp-verifiable (where $0 \leq \beta \leq 1$), if for any action $\mathcal{P}$ by the prover, we have, \resizebox{0.5\textwidth}{!}{$
Tr\left( {\sum\limits_{s}{p(s){P_{\hspace{0pt}}}_{incorrect}^{s}\mathcal{P}\left( {Enc}_{s} \middle| \psi \right\rangle\left\langle \psi \middle| \otimes \middle| acc \right\rangle\left\langle acc \middle| \right)}} \right)) \leq \epsilon.$}

Essentially, this definition indicates that the probability for the output of protocol to be incorrect and the verifier accepting, should be bounded by $\epsilon$.

\subsection{Security Analysis of Scheme}
\begin{theorem}
\label{theo_sec}
The two quantum implementations of a classical function $f$ under QceQiO are indistinguishable in quantum random oracle model.
\end{theorem}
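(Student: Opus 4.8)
The plan is to establish indistinguishability by a hybrid argument along the incremental chain of subpath transformations guaranteed by Definition~\ref{consqceQiO}, reducing each single incremental step to the Clifford twirl and then lifting to general adversaries through the admissible oracle lemma. Concretely, I would fix two quantum implementations $(\rho_1, C_1)$ and $(\rho_2, C_2)$ of the same classical function $f$ and apply Definition~\ref{consqceQiO} to obtain a polynomial $\ell = \ell(n)$, intermediate circuits $C_1' = C_1, C_2', \ldots, C_\ell' = C_2$, and subpaths $\{\Delta SP_i\}_{i \in [\ell-1]}$ of size $B = O(\log n)$ with each $C_i'$ being $\delta$-subpath-equivalent to $C_{i+1}'$. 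Define the hybrid $H_i$ to be the distribution of the intermediate obfuscated implementation $(\rho_i', C_i')$ produced by Steps~1--5 of the construction after the first $i$ insertions, so that $H_0$ and $H_{\ell-1}$ are exactly $QceQiO(1^\lambda, (\rho_1,C_1))$ and $QceQiO(1^\lambda, (\rho_2,C_2))$. It then suffices to bound $|\Pr[D(H_{i-1}) = 1] - \Pr[D(H_i) = 1]| \le negl(\lambda)$ for every $i$ and every QPT distinguisher $D$, since the triangle inequality over the polynomially many hybrids preserves negligibility.

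For the single-step bound, observe that $H_{i-1}$ and $H_i$ differ only by one identity subpath sum $\Delta SP_i$, split as the operator $U_{\Delta SP_i/2}$ applied to $\rho \otimes |flag\rangle^{\otimes\lambda}$ with the remaining half inserted into the paths of $C \otimes I^{\otimes\lambda}$; by Definition~\ref{def_speSP} the operators outside $\Delta SP_i$ are unchanged and $U_{\Delta SP_i}$ is the identity, so functional equivalence with $f$ is preserved exactly. To show that a QPT distinguisher cannot detect whether or where the half-loop was inserted, I would twirl the half-loop operator using the flag register together with the Haar-random unitaries $U_n^k$ supplied by the quantum random oracle of Definition~\ref{definition9}: by the Clifford twirl \cite{ABOEM17} the resulting channel has a distribution independent of the choice of $\Delta SP_i$, so a single-query adversary's acceptance probability changes by at most $negl(\lambda)$. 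At the same time I would invoke Theorem~\ref{theo_protest}: any strategy of $D$ that tries to separate $C_{i-1}'$ from $C_i'$ through their phase polynomials reduces to a polynomial identity test on $\phi_{i-1}$ versus $\phi_i$, which by the Schwartz--Zippel lemma errs with probability at most $\ell^2 d/|R|$; since each $\Delta SP_i$ has size $O(\log n)$ the degree $d$ stays polynomially bounded, and taking $|R| = 2^{\Theta(\lambda)}$ makes this negligible --- this is exactly the step that replaces the exponential security loss of earlier constructions (Table~\ref{tab:qio_comparison}) by a polynomial one.

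Because the Clifford twirl only controls a single query, I would promote the argument to arbitrary multi-query adaptive QPT adversaries using the admissible oracle lemma \cite{GJMZ23}: the oracle $R$ that applies $U_i^\dagger C_f$ and undoes each Haar-random $U_n^{k_i}$ between queries (Definition~\ref{definition9}) is admissible, so single-query indistinguishability composes over the $q = poly(\lambda)$ queries with only negligible loss. Collecting the per-step bound with the hybrid sum then yields $|\Pr[D(QceQiO(1^\lambda,(\rho_1,C_1))) = 1] - \Pr[D(QceQiO(1^\lambda,(\rho_2,C_2))) = 1]| \le \ell \cdot negl(\lambda) + \ell^2 d/|R| \le negl(\lambda)$, which is the claim, and the obfuscated pair remains a $(1-negl(\lambda))$-implementation of $f$ by the functional-equivalence clause of Definition~\ref{qceQiO}.

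I expect the crux of the proof to be the single-step analysis: rigorously showing that splitting an identity subpath sum so that the distinguisher holds \emph{both} halves --- one inside $\rho'$ and one inside $C'$ --- still hides the insertion. A naive adversary would try to recombine the two halves to recover $U_{\Delta SP_i} = I$ and thereby learn the position of the loop; the argument must show that, without the oracle keys $k_i$, this recombination is infeasible and that the twirl genuinely decouples the half-loop from every observable statistic, while simultaneously tracking how the $O(\log n)$ size bound on each $\Delta SP_i$ keeps the phase-polynomial degree, and hence the probability-test error from Theorem~\ref{theo_protest}, under control. Making the interface between the Clifford twirl, the flag register, and the polynomial identity test fully precise is the delicate part.
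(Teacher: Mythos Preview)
Your proposal uses the same toolkit as the paper --- the Clifford twirl from \cite{ABOEM17}, the admissible oracle lemma from \cite{GJMZ23}, and Theorem~\ref{theo_protest} --- but organizes the hybrids quite differently. You walk the incremental chain $C_1 = C_1', C_2', \ldots, C_\ell' = C_2$ supplied by Definition~\ref{consqceQiO}, with one hybrid per intermediate circuit, and bound each adjacent pair; this is the ``standard'' iO hybrid argument. The paper instead never walks that chain: it sets up a bit-guessing game where the challenger picks $b \in \{0,1\}$, hands the adversary a Clifford-twirled version of $(\ket{\psi_b},U_b)$, and then moves through five fixed hybrids (original game $\to$ purified single-query $\to$ replace Clifford by subpaths $\to$ multi-query via admissible oracle $\to$ multi-query with subpaths), bounding the win probability directly by $1/2^\lambda + \ell^2 k/|n+\lambda|$ in Lemma~\ref{lem4} and its analogues. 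Your route has the advantage of making the role of Definition~\ref{consqceQiO} explicit and matching the usual iO template; the paper's route avoids the $\ell$-fold triangle inequality and shows in one shot that the twirl hides the bit $b$ regardless of which implementation it came from. One point to tighten: your identification of $H_i$ as ``produced by Steps~1--5 of the construction after the first $i$ insertions'' conflates the \emph{random} identity subpaths that QceQiO itself inserts with the \emph{specific} chain connecting $C_1$ to $C_2$; the clean formulation is $H_i = QceQiO(1^\lambda,(\rho_i',C_i'))$, i.e.\ the full obfuscator applied to the $i$-th intermediate circuit, and then argue adjacent hybrids are close because the obfuscator's randomness swamps one extra $\Delta SP_i$.
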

\begin{proof}
The proof is achieved through a series of hybrid arguments, where we demonstrate that Adv's success probability remains constant across these hybrids. Hybrid 0 corresponds to the original security game of copy protection. Hybrid 1 corresponds to a pure version of the game, and the adversary is allowed only a single query. Hybrid 3 is identical to Hybrid 1, except that the adversary can access the oracle for an arbitrary number of queries. Hybrid 2 and 4 replaces the Clifford unitaries in Hybrid 1 and 3 with equivalent subpaths.  Similarly, Hybrid 4 is the same as Hybrid 2, with the additional allowance for the adversary to make an arbitrary number of queries to the oracle. 

Hybrid 0: Adversary-challenger game for quantum copy protection.

Hybrid 1: Purified version of the adversary-challenger game for quantum copy protection.

Let $\left( \left. \left| \psi_{0} \right. \right\rangle,U_{0} \right)$ and $\left( \left. \left| \psi_{1} \right. \right\rangle,U_{1} \right)$ be two quantum implementations of the same classical function $f$. The challenger randomly selects a bit $b \in \left\{ 0,~1 \right\}$, determining the corresponding quantum implementation based on the chosen value of $b$. Then, the challenger creates a quantum state $\left| \Psi_{in} \right\rangle\left. ~ = \middle| \psi_{b} \right\rangle\left. \otimes \middle| flag \right\rangle$, where $\left| \psi_{b} \right\rangle$ is the input state of $n$ qubits, and $\left| flag \right\rangle$ is a $\lambda$-qubit flag state representing whether the challenger accepts $(\left| flag \right\rangle\left. ~ = ~ \middle| acc \right\rangle)$ or rejects $\left| flag \right\rangle\left. ~ = ~ \middle| rej \right\rangle$ at the end. Let $t = n + \lambda$, and $C_t$ is the set of $t$-qubit unitary operations. Before the oracle query, the challenger applies the average of all possible unitary units $C$ to $\left| \Psi_{in} \right\rangle$. Specifically, the challenger creates the following quantum state, $\left| \Psi_{1} \right\rangle~ = \frac{1}{\left| C_{t} \right|}{\sum\limits_{C \in \mathcal{C}}C}\left( \left. \left| \psi_{b} \right. \right\rangle \otimes \middle| flag \right\rangle).$

The challenger hands over the created quantum state to adversary $A$, who can query the oracle $R$. Adversary $A$ attempts to guess the value of $b$ randomly chosen by the challenger by observing the output of oracle $R$ and the operations on the quantum register. Adversary $A$ wins if their guessed value $b^{\prime}$ matches the $b$ randomly chosen by the challenger in the first step.
Note that applying the average of all possible unitary units $C$ is equivalent to applying a randomly chosen unitary unit. This is because the set of unitary units is closed under average operations. In other words, for any two unitary units $C_1$ and $C_2$, their average is still a unitary unit. Therefore, the average operation over all possible unitary units can be considered as applying a randomly chosen unitary unit.
\begin{lemma}
For any adversary $A$, the adversary has equal advantage in Hybrid 0 and 1, ${\Pr\left\lbrack {A~wins~in~Hybrid~0} \right\rbrack} = {\Pr\left\lbrack {A~wins~in~Hybrid~1} \right\rbrack}.$
\begin{proof}
This is evident since Hybrid 1 is a purification of Hybrid 0.
\end{proof}
\end{lemma}
\begin{lemma}
For any adversary $A$, there exists a negligible function $negl$, such that for all $\lambda$, $\Pr\lbrack A~wins~in~Hybrid~1\rbrack \leq \epsilon(\lambda) + negl(\lambda).$
\end{lemma}
\begin{proof}
For any completely positive and trace-preserving (CPTP) map, it can be represented through a Kraus decomposition, $\mathcal{E}(\rho) = {\sum\limits_{i}K_{i}}\hspace{0pt}\rho K_{i}^{\dagger},$
 where $K_i$ is a set of Kraus operators satisfying ${\sum\limits_{i}{K_{i}^{\dagger}K}_{i}}\hspace{0pt} = I\hspace{0pt}$, and $\rho$ is an $n$-qubit density matrix. The map $\mathcal{E}$ preserves the trace of quantum states, i.e., $Tr\left\lbrack \mathcal{E}(\rho) \right\rbrack = Tr\lbrack\rho\rbrack$, for all quantum states $\rho$, ensuring that the map does not introduce or deplete quantum probability.
The $n$-qubit Pauli group $\mathbb{P}_{n}$ forms a basis for all $2^n\times 2^n$ matrices, and every Kraus operator can be represented as $K_{i} = {\sum\limits_{j}{\alpha_{ij}P_{j}}},$
where $j$ ranges over all indices of $n$-qubit Pauli operators, and $\alpha_{ij}$ satisfies ${\sum\limits_{j}{\alpha_{ij}{\hspace{0pt}\alpha}_{ij}^{*}}} = 1$. Therefore, $\mathcal{E}(\rho) = {\sum\limits_{ijk}{\alpha_{ij}{\hspace{0pt}\alpha}_{ik}^{*}P_{j}\hspace{0pt}\rho P_{k}}}.$
It can be observed that any general map can be expressed as a combination of Pauli operators acting on both sides of the target state $\rho$. Importantly, the Pauli operators on both sides do not need to be identical.
\begin{lemma}
\label{lem_CTD}
(Clifford Twirl Distortion \cite{ABOEM17}): Let $P_1$ and $P_2$ be two operators from the $n$-qubit Pauli group satisfying $P_{1} \neq P_{2}$. For any $n$-qubit density matrix $\rho$, the following relation holds $\sum\limits_{C \in C_{n}}{C^{\dagger}P_{1}\hspace{0pt}C\rho C^{\dagger}P_{2}\hspace{0pt}C = 0}.$

\end{lemma}
Before the oracle query, the challenger averages the effect of all possible Clifford operators $C$ on the state $\left| \Psi_{in} \right\rangle$. In this case, the state received by the adversary is given by $\rho = \frac{1}{\left| C_{t} \right|}{\sum\limits_{l}{\left. C_{l} \middle| \Psi_{in} \right\rangle\left\langle \left. \Psi_{in} \right| \right.C_{l}^{\dagger}}}.$
The adversary performs a dishonest operation $\mathcal{E}$ on this state and uses the above formula, yielding $\frac{1}{\left| C_{t} \right|}{\sum\limits_{ijkl}{\left. \alpha_{ij}{\hspace{0pt}\alpha}_{ik}^{*}P_{j}C_{l} \middle| \Psi_{in} \right\rangle\left\langle \left. \Psi_{in} \right| \right.C_{l}^{\dagger}P_{k}}}.$
Subsequently, the adversary takes this state and applies the oracle operation, which involves the reversal of the Clifford applied by the challenger, $\frac{1}{\left| C_{t} \right|}\sum_{ijkl}\left(\alpha_{ij}\alpha_{ik}^{*}C_{l}^{\dagger}P_{j}C_{l} | \Psi_{in}\rangle\langle\Psi_{in}|C_{l}^{\dagger}P_{k}C_{l} \right).$ 

According to Lemma \ref{lem_CTD}, all terms where different Pauli operations act on the two sides (i.e., $j \neq k$) will vanish. The result is $\left| \Psi_{out} \right\rangle\left\langle \Psi_{out} \right| = \frac{1}{\left| C_{t} \right|}\sum_{ijl}\left(\alpha_{ij}\alpha_{ij}^{*}C_{l}^{\dagger}P_{j}C_{l}\left| \Psi_{in} \right\rangle\left\langle \Psi_{in} \right|C_{l}^{\dagger}P_{j}C_{l}\right).$

Each Pauli term is conjugated by a Clifford operator from the set $C_t$, implying that for each Pauli term $P$, there exists a Clifford operator $C$ such that $C_{l}^{\dagger}P_{j}C_{l}$ is another Pauli term. That is, ${\sum\limits_{l}{C_{l}^{\dagger}P_{j}C}_{l}} = {\sum\limits_{P \in \mathbb{P}_{t}}P}$, and since ${\sum_{ij}{\alpha_{ij}{\hspace{0pt}\alpha}_{ij}^{*}}} = 1$, we obtain $
\left| \Psi_{out} \right\rangle\left\langle \left. \Psi_{out} \right| \right. = \beta\left| \Psi_{in} \right\rangle\left\langle \left. \Psi_{in} \right| \right. + \frac{1 - \beta}{4^{t} - 1}{\sum\limits_{i,P_{i} \neq I}{P_{i}\left| \Psi_{in} \right\rangle\left\langle \Psi_{in} \middle| P_{i} \right.}},$
where $\beta$ is the coefficient that yields the ideal outcome, $0 \leq \beta \leq 1$. On each qubit $t=n+\lambda$, there are four possible Pauli operations, $\{I,X,Y,Z\}$, excluding the case of $I_t$. Finally, computing 
$Tr\left( P_{incorrect}\hspace{0pt} \middle| \Psi_{out} \right\rangle\left. \left\langle \left. \Psi_{out} \right| \right. \right)= \frac{1 - \beta}{4^{t} - 1}{\sum\limits_{i,P_{i} \neq I}{Tr\left( P_{incorrect}\hspace{0pt}P_{i}\left| \Psi_{in} \right\rangle\left\langle \Psi_{in} \middle| P_{i} \right. \right)}}$.

This term is non-zero only when $P_i$ acts as the identity on the flag subsystem $\left| flag \right\rangle$ in $\left| \Psi_{in} \right\rangle$. The number of such terms can be accurately calculated as $4^{n}2^{\lambda} - 1$, which yields $
Tr\left( P_{incorrect}\hspace{0pt} \middle| \Psi_{out} \right\rangle\left. \left\langle \left. \Psi_{out} \right| \right. \right) \leq \left( {1 - \beta} \right)\frac{4^{n}2^{\lambda} - 1}{4^{n + \lambda\hspace{0pt}}} \leq \frac{1}{2^{\lambda}}\hspace{0pt}.$
\end{proof}

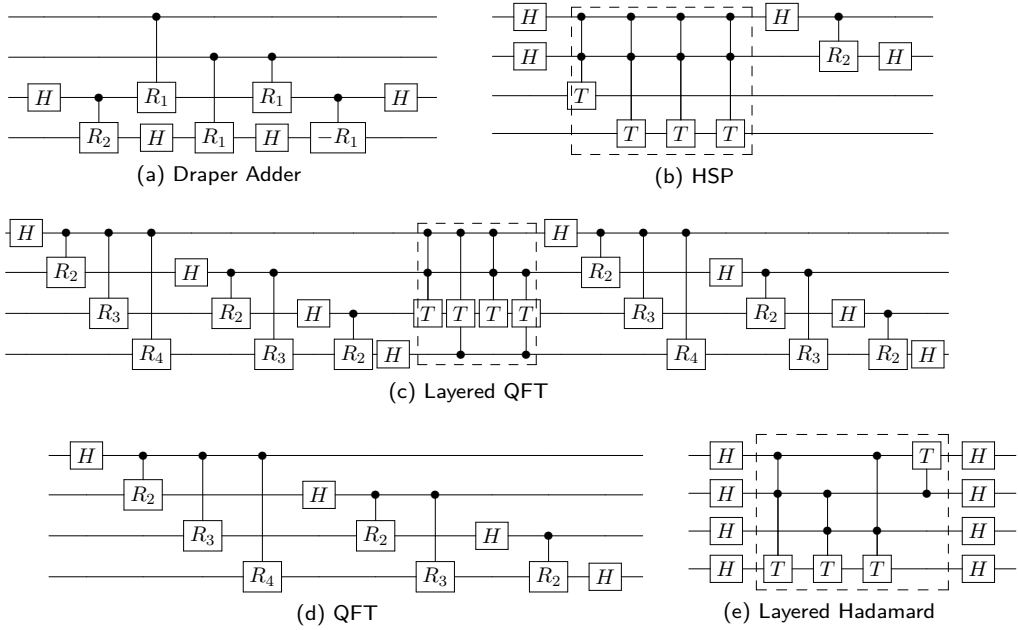
\begin{figure*}[htp]
\centering
\subfloat[Draper Adder]{
\scalebox{0.73}{
\Qcircuit @C=0.91em @R=0.5em @!R=1.4em {
& \qw & \qw        & \ctrl{2} & \qw  & \qw & \qw& \qw& \qw\\
& \qw & \qw        & \qw & \ctrl{2}& \ctrl{1} & \qw& \qw& \qw\\
& \gate{H} & \ctrl{1} & \gate{R_1} & \qw & \gate{R_1}& \ctrl{1}& \gate{H}& \qw\\
& \qw & \gate{R_2} & \gate{H} &\gate{R_1} & \gate{H}& \gate{-R_1}& \qw& \qw\\
}}
}
\subfloat[HSP]{
\quad
\scalebox{0.73}{
\Qcircuit @C=1em @R=0.5em {
& \gate{H} & \ctrl{2}  & \ctrl{3} & \ctrl{3}  & \ctrl{3} & \gate{H}& \ctrl{1}& \qw& \qw\\
& \gate{H} & \ctrl{1}  & \ctrl{2} & \ctrl{2}& \ctrl{2} & \qw&\gate{R_2}& \gate{H}& \qw\\
& \qw & \gate{T}    & \qw        & \qw      & \qw    & \qw & \qw& \qw& \qw\\
& \qw & \qw         &  \gate{T} &\gate{T} & \gate{T} & \qw& \qw& \qw& \qw \gategroup{1}{3}{4}{6}{.7em}{--}\\
}}
}
\newline
\centering
\subfloat[Layered QFT]{
\scalebox{0.73}{
\Qcircuit @C=0.21em @R=0.5em {
& \gate{H} & \ctrl{1} & \ctrl{2} & \ctrl{3} & \qw &\qw & \qw & \qw & \qw & \qw & \ctrl{2} & \ctrl{2} & \ctrl{2} &\qw  & \gate{H} & \ctrl{1} & \ctrl{2} & \ctrl{3} & \qw &\qw & \qw & \qw & \qw & \qw & \qw\\
& \qw   &  \gate{R_2} & \qw & \qw &\gate{H} & \ctrl{1} & \ctrl{2} & \qw & \qw & \qw& \ctrl{1} & \qw & \ctrl{1}& \ctrl{1} &\qw  & \gate{R_2} & \qw & \qw &  \gate{H}& \ctrl{1} &  \ctrl{2} & \qw & \qw & \qw & \qw\\
& \qw & \qw & \gate{R_3} & \qw & \qw &\gate{R_2} & \qw & \gate{H} & \ctrl{1} & \qw &  \gate{T} & \gate{T} & \gate{T} & \gate{T}& \qw &  \qw & \gate{R_3} & \qw & \qw & \gate{R_2}& \qw & \gate{H} & \ctrl{1} & \qw & \qw\\
& \qw & \qw & \qw & \gate{R_4} & \qw &\qw & \gate{R_3} & \qw & \gate{R_2} & \gate{H} & \qw &\ctrl{-1} &  \qw & \ctrl{-1} & \qw & \qw & \qw & \gate{R_4} & \qw & \qw & \gate{R_3}  & \qw & \gate{R_2} & \gate{H} & \qw \gategroup{1}{12}{4}{15}{.7em}{--}\\
}}}
\newline
\centering
\subfloat[QFT]{
\scalebox{0.73}{
\Qcircuit @C=1em @R=0.5em {
& \gate{H} & \ctrl{1} & \ctrl{2} & \ctrl{3} & \qw & \qw& \qw& \qw& \qw& \qw& \qw\\
& \qw & \gate{R_2}   & \qw       & \qw & \gate{H} & \ctrl{1}& \ctrl{2}& \qw& \qw& \qw& \qw\\
& \qw & \qw          & \gate{R_3} & \qw &  \qw   & \gate{R_2}& \qw& \gate{H}& \ctrl{1}& \qw& \qw\\
& \qw & \qw         & \qw       &\gate{R_4} & \qw & \qw    & \gate{R_3}& \qw& \gate{R_2}& \gate{H}& \qw\\
}}}
\subfloat[Layered Hadamard]{
\quad
\scalebox{0.73}{
\Qcircuit @C=1em @R=0.5em {
& \gate{H} & \ctrl{3}  & \qw  & \ctrl{3}  &\gate{T} & \gate{H}& \qw\\
& \gate{H} & \ctrl{2}  & \ctrl{2} & \qw& \ctrl{-1} & \gate{H}& \qw\\
& \gate{H} & \qw    & \ctrl{1}   & \ctrl{1}      & \qw    & \gate{H} & \qw\\
& \gate{H} & \gate{T}  &  \gate{T} &\gate{T} & \qw& \gate{H}& \qw \gategroup{1}{3}{4}{6}{.7em}{--}\\
}}}

\caption{Circuits for Benchmarks one. Dashed boxes in figure represent randomly generated sets of Toffoli gates.}
\label{fig11}
\end{figure*}

Hybrid 2: The challenger applies a randomly chosen unitary to $\left| \Psi_{in} \right\rangle$ before the oracle query, replacing it with an equivalent subpath, while the rest remains the same as Hybrid 1.
\begin{lemma}
\label{lem4}
For any adversary $A$ in Hybrid 2, there exists a negligible function $negl$, such that for all $\lambda$, $\left| \Pr\left\lbrack  {A~wins~in~Hybrid~2} \right\rbrack \right. 
 - \left. \Pr\left\lbrack {A~wins~in~Hybrid~1} \right\rbrack \right| \leq \text{negl}(\lambda).$
\end{lemma}

\begin{proof}
\label{lemma_4}
In Hybrid 2, the Haar-random unitary operations used in Hybrid 1 are replaced by subpaths sampled according to a refined path sum distribution $\mathcal{D}_{\Delta SP}$. Unlike standard path sums, here we focus on paths whose net effect is equivalent to the identity operator. This distinction arises because, in Hybrid 1, each Haar-random unitary $U$ applied to the quantum state $\rho$ is immediately followed by its inverse $U^{\dagger}$ when interacting with the classical implementation $C_f$. Consequently, the effective action on the overall circuit is nullified, and we need only consider equivalent paths whose cumulative phase and state evolution correspond to the identity operation. Thus, it suffices to replace each $U$ and $U^{\dagger}$ pair with a subpath $\Delta SP$ that has a zero net effect.

More formally, for a given unitary transformation $U$, we define a subpath sum operator $U_{\Delta SP}$ that captures the incremental effect of identity gates along the selected path segment. Each $\Delta SP$ is constructed to preserve the input-output relations of the original $U$-$U^{\dagger}$ pairs. The distribution $\mathcal{D}_{\Delta SP}$ is defined over the set of paths $\Delta SP$ such that $U_{\Delta SP}$ is functionally equivalent to the identity. This requirement ensures that the replacement of $U$ and $U^{\dagger}$ with $\Delta SP$ does not alter the adversary’s ability to distinguish Hybrid 1 from Hybrid 2.
Since the identity-equivalent paths are closed under unitary conjugation, the distribution $\mathcal{D}_{\Delta SP}$ guarantees uniform coverage over all such equivalent paths, preserving the input-output behavior of the quantum implementation.

To analyze the adversary's advantage, consider an adversary $A$ interacting with the modified circuit in Hybrid 2. Because the paths in $\Delta SP$ are sampled from $\mathcal{D}_{\Delta SP}$, which is statistically indistinguishable from the identity effect of $U$ and $U^{\dagger}$ pairs, the adversary’s probability of distinguishing the modified implementation from the original Hybrid 1 remains negligible. The statistical distance between these probabilities is governed by the distance between $\mathcal{D}_{\Delta SP}$ and the identity path distribution $\mathcal{I}(U_{\text{rand}})$, which is at most negligible in $\lambda$. 
Combining with Theorem~\ref{theo_protest}, it follows that for any adversary $A$ in Hybrid 2, there exists a negligible function $negl$, such that for all $\lambda$, 
${\Pr\left\lbrack {A~wins~in~Hybrid~2} \right\rbrack} \leq \frac{1}{2^{\lambda}} + \ell^{2}k/\left| {n + \lambda} \right|$

Hence,
$ \left| \Pr\left\lbrack  {A~wins~in~Hybrid~2} \right\rbrack \right. 
 - \left. \Pr\left\lbrack {A~wins~in~Hybrid~1} \right\rbrack \right| \leq \text{negl}(\lambda).$
\end{proof}

Hybrid 3: Same as Hybrid 1, except the adversary is allowed to make arbitrarily many queries.
\begin{lemma}
\label{lem_1_3}
For any adversary $A$ in Hybrids 1 and 3, there exists a negligible function $negl$ such that for all $\lambda$, \resizebox{0.5\textwidth}{!}{$\left| {{\Pr\left\lbrack {A~wins~in~Hybrid~3} \right\rbrack} - {\Pr\left\lbrack {A~wins~in~Hybrid~1} \right\rbrack}} \right| \leq \text{negl}(\lambda).$}
\end{lemma}

\begin{proof}
We adopt the `admissible oracle lemma' \cite{GJMZ23}, a tool that allows the security of multiple queries to be reduced to the security of a single query. An adversary-challenger distinguishing game, known as the $(W,~\Pi)$-distinguishing game \cite{GJMZ23}, is parameterized by binary observable $W$ and commuting projection $\Pi$.
\begin{definition}
($(W,~\Pi)$-Distinguishing Game): Let $(A, B)$ be two quantum registers. $W$ is a binary observable, and $(\Pi)$ is a projection operator acting on $(A, B)$ such that $(\Pi)$ commutes with W. Consider the following distinguishing game,
\begin{itemize}
\item The adversary sends a quantum state to the challenger, where this quantum state resides on the two quantum registers $(A, B)$.
\item The challenger chooses a random bit $\left. b\leftarrow\left\{ 0,~1 \right\} \right.$. Subsequently, it performs the measurement $\left\{ \Pi,~I~ - \Pi \right\}$. If the measurement is rejected, the game is aborted, and a random bit $\left. ~b^{\prime}\leftarrow\left\{ 0,~1 \right\} \right.$ is output. Otherwise, the challenger applies $W^b$ to $(A, B)$ and returns $B$ to the adversary.
\item The adversary outputs a guessed bit $b^{\prime}$, where the adversary's distinguishing advantage is $\left| Pr\left\lbrack b^{'} = b \right\rbrack - 1/2 \right|$.
\end{itemize}

\end{definition}

\begin{lemma}
(Admissible Oracle): Let $G$ be an admissible unitary operator, such that $G$ commutes with $W$ and $\Pi$, and $G$ acts identically on $I-\Pi$, i.e., $G(I - \Pi) = I - \Pi$. Even if the adversary is given oracle access to $G$, each adversary achieves a negligible advantage in the $(W, \Pi)$-distinguishing game.
\end{lemma}
$G$ is the unitary operator acting as a quantum circuit $C$ on the range of $\Pi$, and acting as the identity operator $I$ on the range of $1-\Pi$.
If the $(W, \Pi)$-distinguishability game is hard, meaning the distinguishing advantage is negligible, then the game remains hard when the adversary is given oracle access to any admissible unitary operator $G$. 
Hence, \resizebox{0.43\textwidth}{!}{
$\left| {{\Pr\left\lbrack {A~wins~in~Hybrid~3} \right\rbrack} - {\Pr\left\lbrack {A~wins~in~Hybrid~1} \right\rbrack}} \right| \leq \text{negl}(\lambda).$}
\end{proof}

Hybrid 4 is the same as Hybrid 2, except that the adversary can make an arbitrary number of queries.
\begin{lemma}
For any adversary $A$ in Hybrids 2 and 4, there exists a negligible function $negl$ such that for all $\lambda$, \resizebox{0.5\textwidth}{!}{$| {\Pr\left\lbrack {A~wins~in~Hybrid~4} \right\rbrack} - {\Pr\left\lbrack {A~wins~in~Hybrid~2} \right\rbrack} |  \leq \text{negl}(\lambda).$}
\end{lemma}
\begin{proof}
In the Clifford+$R_Z$ circuit, each phase gate contributes exactly one term to the phase polynomial. Assuming that each query inserts one phase polynomial, the adversary making q queries will insert $q$ polynomials, where $q = poly(\lambda)$. By combining with Theorem \ref{theo_protest}, for any adversary $A$ in Hybrid 4, there exists a negligible function $negl$, such that for all $\lambda$, ${\Pr\left\lbrack {A~wins~in~Hybrid~4} \right\rbrack} \leq \frac{1}{2^{\lambda}} + q^{2}k/\left| {n + \lambda} \right|$.

According to Lemma \ref{lem4} and \ref{lem_1_3}, 
\resizebox{0.43\textwidth}{!}{
$| {\Pr\left\lbrack {A~wins~in~Hybrid~4} \right\rbrack} { -\Pr\left\lbrack {A~wins~in~Hybrid~2} \right\rbrack} |  \leq \text{negl}(\lambda).$}
\end{proof}
Thus the proof of Theorem \ref{theo_sec} is complete.
\end{proof}
In this context, we have not considered encrypting the output results $y$, as our focus is on the functional encryption of $f$, specifically the indistinguishability of quantum implementation. It is noteworthy that by leveraging the obfuscator as a foundational primitive, one can extend its utility to construct quantum homomorphic encryption protocols \cite{Zhang21,2023Pseudo}. This extension enables further achievements, such as quantum delegated computations with encrypted quantum states.

\begin{figure*}[hbt]
  \centering
  \subfloat[Draper Adder]{\includegraphics[width=0.34\linewidth]{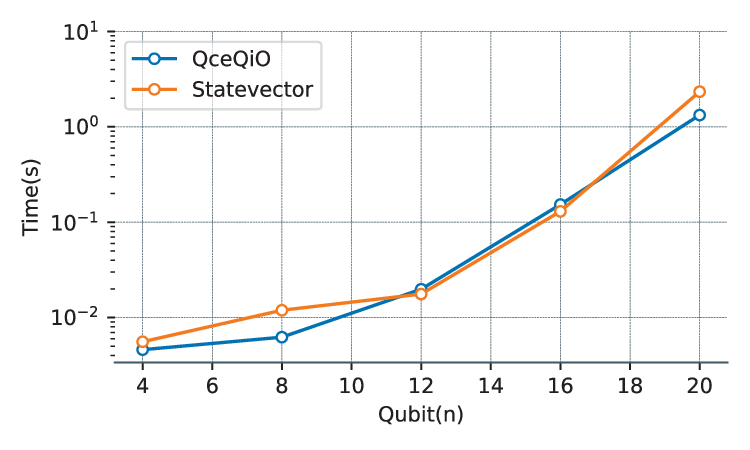}\label{fig:a}}
  \subfloat[QFT]{\includegraphics[width=0.34\linewidth]{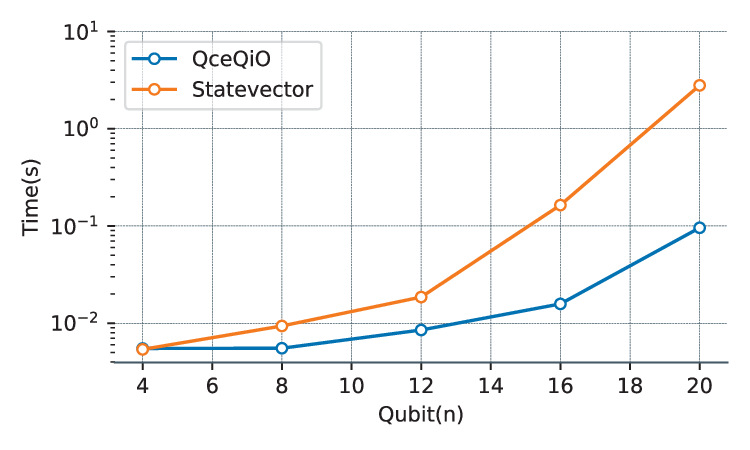}\label{fig:b}}
  \subfloat[HSP]{\includegraphics[width=0.34\linewidth]{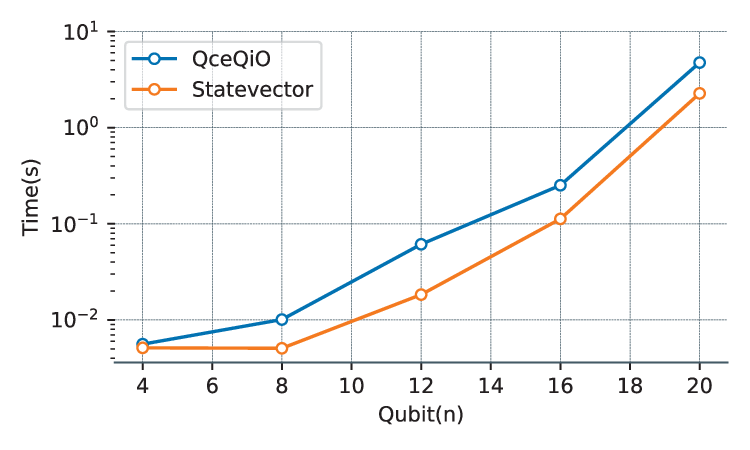}\label{fig:c}}
\newline
  \subfloat[Layered QFT]{\includegraphics[width=0.4\linewidth]{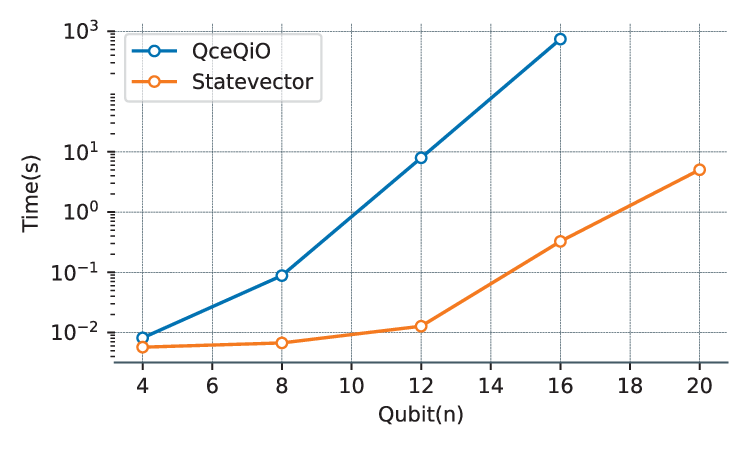}\label{fig:b}}
  \subfloat[Layered Hadamard]{\includegraphics[width=0.4\linewidth]{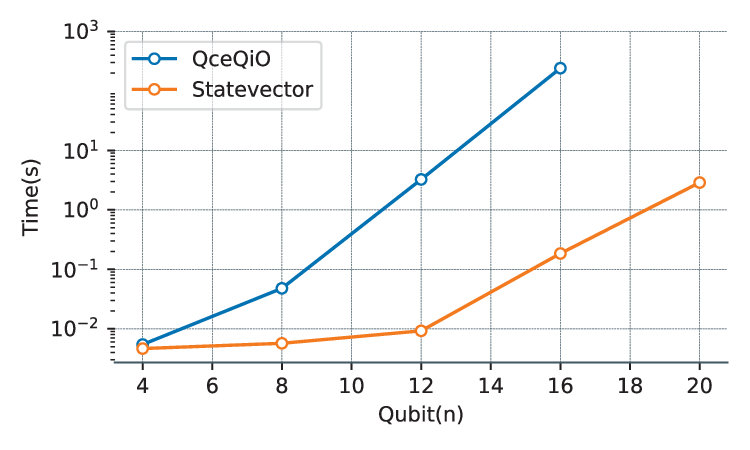}\label{fig:c}}
  \caption{Runtime graph for all test circuits. Each data point represents the average of 5 runs, and random generation is performed anew with each run.}
\label{fig12}
\end{figure*}

\section{Performance Analysis}
\label{section6}

To demonstrate the performance of QceQiO in the previous section, wel compare QceQiO with the method of state vectors under matrix semantics and perform equivalence tests.

Indistinguishability obfuscator QceQiO and the functional equivalence distinguisher run in a Linux C++ environment on Ubuntu 22.04.3 LTS, hosted on a server with an 8-core Intel Core i7 2.30GHz processor and 16GB RAM. The distinguisher primarily utilizes the equivalence checking algorithm based on Feynman path sums \cite{Amy2018}.

\textbf{The performance between quantum circuits before and after obfuscation.}
Five representative quantum circuits are selected to demonstrate performance benchmarks before and after obfuscation, Draper Adder, HSP (Hidden Subgroup Problem), QFT (Quantum Fourier Transform), Layered Hadamard, and Layered QFT. Draper Adder, HSP, and QFT are all part of the generic circuits used to build Shor's prime number factorization algorithm. Layered Hadamard and Layered QFT are included to showcase performance with more path variables, as quantum algorithms known to achieve exponential speedup over classical algorithms often begin and end with Hadamard or QFT operations.


The Layered Hadamard circuit for $n$ qubits consists of $2n$ Hadamard layers surrounding a set of randomly generated $n$ Toffoli gates. The Layered QFT circuit for $n$ qubits comprises two QFT surrounding a set of randomly generated $n$ Toffoli gates. The HSP circuit for $n$ qubits includes Hadamard transforms on register $a$, $n$ randomly generated Toffoli gates controlled by $a$ to $b$, and a QFT on register $a$. The specific circuits are shown in Fig. \ref{fig11}. 
The random generation here refers to the selection of control and target qubits being randomly chosen. The newly generated circuit is created solely for testing purposes and may not necessarily have practical arithmetic significance. Additionally, during testing, randomization is applied anew with each run.

The testing of the original circuits employs the matrix semantics under the state-vector method, which stores the entire $2^n$-sized state space of an $n$-qubit system in memory and updates it for each applied quantum gate. For each gate application, iterating over all $2^n$ elements in the memory space is required. Thus, for a circuit with $\alpha$ gates, the time and space complexity of the state vector are $O\left( \alpha 2^{n} \right)$ and $O\left(2^{n} \right)$, respectively. After QceQiO obfuscation, the quantum circuits use the construction method based on subpath sums, resulting in time and space complexities of $O\left( {(\alpha - h)2^{h}} \right)$ and $O(n+h)$, respectively. Here, $h$ represents the number of branching gates in the Clifford gates, such as the $H$ gate.

The testing results of the original and obfuscated quantum circuits are shown in Fig. \ref{fig12}, where each data point represents the average of 5 runs. It can be observed that, for the Draper Adder, HSP, and QFT circuits, as the number of qubits $n$ increases, the obfuscated circuits can almost match the runtime of the original circuits. Although, due to the additional presence of many $H$ gates, the obfuscated circuits for Layered Hadamard and Layered QFT take longer to run compared to their original counterparts, they show significant advantages in memory space usage.

\textbf{The equivalence between quantum circuits before and after obfuscation.}
Table \ref{tab:verify_result} provides verification results of the functional equivalence of quantum circuits after the QceQiO obfuscation. The $n$ column lists the number of qubits, the $m$ column lists the number of path variables, and the Clifford and $T$ columns list the respective counts of quantum gates. The positive verification and negative verification times measure the system's runtime for determining equivalence and non-equivalence, respectively. 

The benchmarks includes circuits obtained from the literature or adjusted from publicly available circuits. All the benchmark sets used are frequently employed in quantum circuit optimization and equivalence checking, as documented in the suites like \cite{2023Synthesizing,2022ZX}. Specifically, the benchmarks consist of implementations of reversible arithmetic and rigorous quantum algorithms, as these circuits are considered a primary source of complexity in quantum algorithms. Among them, the oracle for the Grover algorithm is defined as $f(x)=\neg x_1\bigwedge\neg x_2\bigwedge x_3\bigwedge x_4\bigwedge\neg x_5$, and the quantum Fourier transform for 4 qubits is approximated using higher-order rotation gates on Clifford+$T$.

All benchmark tests show successful outcomes in terms of functionality equivalence. The largest benchmark circuit, HWB$_8$, with 12 bits, 2282 path variables, and over 26,000 gates, took approximately 4.6 hours to complete, while the rest of the circuits were completed within one hour, with the majority finishing within one minute. Moreover, when randomly adding or removing selected gates from the quantum circuits obfuscated by QceQiO and retesting them with the distinguisher, all circuits were proven to be inequivalent. The time required for both positive and negative verification results is roughly the same.
\renewcommand{\arraystretch}{1.1} 
\begin{table}
\caption{Verification Results after QceQiO.  \label{tab:table1}}

\centering
\scalebox{0.65}{
\begin{tabular}{|c||c||c||c||c||c||c|}
\hline
\multirow{2}{*}{Benchmarks} & \multirow{2}{*}{n} & \multirow{2}{*}{m} & \multirow{2}{*}{Clifford} & \multirow{2}{*}{T} & \multicolumn{2}{c|}{Time(s)} \\
\cline{6-7}
& & & & & Positive & Negative \\

\hline
$\mathrm{Adder_8}$	&24	&160	&1419	&614	&4.271	&4.295 \\
\hline
$\mathrm{Toff\_Barenco_3}$&	5	&12	&66	&44	&0.003	&0.003\\
\hline
$\mathrm{Toff\_Barenco_4}$&	7&	20&	127&	84&	0.003	&0.004\\
\hline
$\mathrm{Toff\_Barenco_5}$&	9	&28	&160	&124&	0.007	&0.007\\
\hline
$\mathrm{Toff\_Barenco_{10}}$&	19	&68	&493&	324&	0.045&	0.046\\
\hline
$\mathrm{CSLA\_MUX_3}$&	15	&40	&289	&132	&0.022	&0.022\\
\hline
$\mathrm{CSUM\_MUX_9}$&	30&	56&	638&	280&	0.215&	0.191\\
\hline
$\mathrm{GF(2^4)\_Mult}$&	12&	28	&263&	180	&0.009&	0.009\\
\hline
$\mathrm{GF(2^5)\_Mult}$&	15&	36&	393&	286&	0.017&	0.016\\
\hline
$\mathrm{GF(2^6)\_Mult}$&	18	&44&	559	&402	&0.027	&0.027\\
\hline
$\mathrm{GF(2^7)\_Mult}$&	21	&52	&731&	560&	0.042&	0.043\\
\hline
$\mathrm{GF(2^8)\_Mult}$&	24	&60	&975	&712	&0.087	&0.082\\
\hline
$\mathrm{GF(2^9)\_Mult}$&	27	&68&	1179	&918	&0.096&	0.088\\
\hline
$\mathrm{GF(2^{10})\_Mult}$&	30	&76	&1475	&1110	&0.135	&0.132\\
\hline
$\mathrm{GF(2^{16})\_Mult}$&	48	&124&	3694	&2832	&0.766&	0.699\\
\hline
$\mathrm{GF(2^{32})\_Mult}$&	96&	252&	14259	&11296&	11.605&	12.137\\
\hline
$\mathrm{GF(2^{64})\_Mult}$&	192&	508&	55408	&45120&	139.436	&132.892\\
\hline
$\mathrm{GF(2^{128})\_Mult}$&	384	&1020	&231318	&180352&	2162.481&	2110.475\\
\hline
$\mathrm{Grover_5}$&	9&200	&1515&490	&0.971 &0.917\\
\hline
$\mathrm{Hamming_{15}\_high}$&	20	&716	&5332	&3462	&74.322	&75.451\\
\hline
$\mathrm{Hamming_{15}\_low}$&17&	76&	612&	158	&0.223	&0.203\\
\hline
$\mathrm{Hamming_{15}\_med}$&17&	184	&1251	&762	&1.917&	1.799\\
\hline
$\mathrm{HWB_6}$&7&	52&	369	&180&	0.191	&0.224\\
\hline
$\mathrm{HWB_8}$&12	&2282	&17583	&8895	&16655.021	&17415.308\\
\hline
$\mathrm{Mod\_Adder_{1024}}$&28&	660&	4363&	3006&	64.456	&57.861\\
\hline
$\mathrm{Mod\_Mult_{55}}$&9	&28	&180	&84	&0.009&	0.009\\
\hline
$\mathrm{Mod\_Red_{21}}$&11&	60&	392&	192	&0.04	&0.041\\
\hline
$\mathrm{Mod5_4}$&5	&12	&66&	44&	0.002	&0.002\\
\hline
$\mathrm{QCLA\_Adder_{10}}$&36	&100&	627	&400	&1.216	&1.136\\
\hline
$\mathrm{QCLA\_Com_7}$&24&	74	&1237	&297&	1.181&	1.471\\
\hline
$\mathrm{QCLA\_Mod_7}$&26	&164&	1641	&650	&48.569	&44.402\\
\hline
$\mathrm{QFT_4}$&5&	84	&218	&136&	0.048&	0.048\\
\hline
$\mathrm{RC\_Adder_6}$&14&	44&	322	&124&	0.068&	0.073\\
\hline
$\mathrm{Toff_3}$&5	&12	&52	&36	&0.002	&0.002\\
\hline
$\mathrm{Toff_4}$&7	&20&	87&	58	&0.003&	0.003\\
\hline
$\mathrm{Toff_5}$&9	&18&	112	&80	&0.005&	0.005\\
\hline
$\mathrm{Toff_{10}}$&19&	68	&297	&190	&0.02&	0.024\\
\hline
$\mathrm{VBE\_Adder_3}$&10	&20	&167&	94	&0.011&	0.011\\
\hline
\end{tabular}}
\label{tab:verify_result}
\end{table}

\section{Conclusion}
\label{section7}
In this paper, we solved the urgent problem for confidentiality and proprietary of quantum circuits in commercially sensitive delegated computing environments. We proposed a quantum indistinguishable obfuscation scheme based on the equivalence of quantum circuits, overcoming restriction observed in previous task-oriented approaches. It not only contributes to advancing the field of quantum information security but also provides insights into practical applications such as secure delegated computing and intellectual property protection in quantum environments. 

Like other quantum obfuscation and quantum computing verification schemes, the proposed scheme requires verifiers to have a quantum computer of constant size. This is because verifiers can transform classical functions into quantum implementations, requiring the ability to prepare the necessary quantum states in quantum networks.
Future research will focus on optimizing and restricting the quantum capabilities of verification parties using techniques such as quantum teleportation and semi-quantum.

\bibliographystyle{quantum}

\bibliography{bare_jrnl_new}
%
%
%
%

\end{document}